\newtheorem{lemma}{Lemma}
\newtheorem{theorem}{Theorem}
\newtheorem{definition}{Definition}
\newcommand{\fb} {\mathcal{B}}
\newcommand{\fc} {\mathcal{C}}
\newcommand{\fs} {\mathcal{S}}
\newcommand{\latency} {good-case latency\xspace}
\newcommand{\ack}{\texttt{ack}\xspace}
\newcommand{\ackone}{\texttt{ack}\xspace}
\newcommand{\vote}{\texttt{vote}\xspace}
\newcommand{\voteone}{\texttt{vote1}\xspace}
\newcommand{\votetwo}{\texttt{vote2}\xspace}
\newcommand{\blame}{\texttt{blame}\xspace}
\newcommand{\blameone}{\texttt{blame1}\xspace}
\newcommand{\blametwo}{\texttt{blame2}\xspace}
\newcommand{\status}{\texttt{status}\xspace}
\newcommand{\clock}{\sigma\xspace}
\newcommand{\bb}{$1\Delta$-BB\xspace}
\newcommand{\smr}{$1\Delta$-SMR\xspace}
\newcommand{\smrms}{$1\Delta$-SMR-MSF\xspace}
\newcommand{\ba}{$1\Delta$-BA\xspace}
\newcommand{\locked}{b_{lck}\xspace}
\newcommand{\stitle}[1]{\vspace{0.8ex} \noindent\textsf{\textbf{#1}}}
\renewcommand{\paragraph}[1]{\smallskip\stitle{#1}}
\newtcolorbox{mybox}[1][]{
enhanced,
colback=white,
boxsep=0pt,
#1
}
\newlist{numlist}{enumerate}{10}
\setlist[numlist]{label*=\arabic*.}
\xpatchcmd{\declaretheorem}{{thmdef}{#1}}{{thmdef}{numberlike=theorem,#1}}{}{}
\begin{document}

\title{Byzantine Agreement, Broadcast and State Machine Replication with Near-optimal Good-case Latency~\thanks{Kartik Nayak and Ling Ren are supported in part by a VMware early career faculty grant.}} 

\author[1]{Ittai Abraham}
\author[2]{Kartik Nayak}
\author[3]{Ling Ren}
\author[4]{Zhuolun Xiang~\thanks{Lead author}}
\affil[1]{VMware Research\\ {iabraham@vmware.com}}
\affil[2]{Duke University\\ {kartik@cs.duke.edu}}
\affil[3, 4]{University of Illinois at Urbana-Champaign\\ {\{renling, xiangzl\}@illinois.edu}}

\maketitle

\begin{abstract}
This paper investigates the problem  \textit{good-case latency} of Byzantine agreement,  broadcast and state machine replication in the synchronous authenticated setting. The good-case latency measure captures the time it takes to reach agreement when all non-faulty parties have the same input (or in BB/SMR when the sender/leader is non-faulty).
Previous result implies a lower bound showing that any Byzantine agreement or broadcast protocol tolerating more than $n/3$ faults must have a good-case latency of at least $\Delta$ \cite{synchotstuff}, where $\Delta$ is the assumed maximum message delay bound. 
Our first result is a  family of protocols we call $1\Delta$ that have near-optimal \latency. We propose a protocol $1\Delta$-BA that solves Byzantine agreement in the synchronous and authenticated setting with near-optimal good-case latency of $\Delta+2\delta$ and optimal resilience $f<n/2$, where $\delta$ is the actual (unknown) delay bound. 
We then extend our protocol and present $1\Delta$-BB and $1\Delta$-SMR for Byzantine fault tolerant broadcast and state machine replication, respectively, in the same setting and with the same good-case latency of $\Delta+2\delta$ and $f<n/2$ fault tolerance.
Our $1\Delta$-SMR upper bound improves the gap between the best current solution, Sync HotStuff, which obtains a good-case latency of $2\Delta$ per command and the lower bound of $\Delta$ on good-case latency.
Finally, we investigate weaker notions of the synchronous setting and show how to adopt the $1\Delta$ approach to these models.
\end{abstract}

\section{Introduction}

\label{sec:intro}

Byzantine agreement (BA) and Byzantine broadcast (BB) are fundamental problems in distributed computing. 
In {\em Byzantine agreement}, each replica has an input and each must decide on an output. All non-faulty (honest) replicas must decide on the same output. Moreover, if all honest replicas have the same input then this must be the decision output.
In \textit{Byzantine broadcast}, there is a {\em designated sender} that sends its input to all replicas, and all honest replicas must decide on the same output.
Moreover, if the sender is honest, then the decision output must be sender's input.
One of the most important practical applications of BA and BB is to implement {\em Byzantine fault tolerant (BFT) state machine replication (SMR)} \cite{schneider1990implementing}, which provides clients with the illusion of a single honest server, 
by ensuring that all honest replicas agree on the same sequence of client inputs.


It is well known that under asynchrony or partial synchrony, the optimal resilience for any BA, BB or BFT SMR protocol is $f<n/3$ \cite{dwork1988consensus, castro1999practical}.
In order to move beyond $f<n/3$, we need to assume a synchronous \cite{dwork1988consensus} and authenticated \cite{fischer1990easy} setting. The optimal resilience for BA and BFT SMR in the synchronous authenticated setting is $f<n/2$, and $f<n$ for BB in the same setting.
However, synchronous BA, BB and BFT SMR have been considered impractical for a long time for several reasons.


Firstly, to take advantage of the synchrony assumption, most theoretical works on Byzantine agreement and broadcast assume lock-step execution~\cite{LSP82,dolev1983authenticated,dolev1990early,katz2006expected}, where replicas start and end each round at the same time.
As a result, the latency of BA/BB protocols is typically measured by the round complexity.
Such a lock-step assumption simplifies the protocol design but it is considered impractical because it not only is hard to enforce but also leads to poor performance. 
A synchronous protocol requires a known upper bound $\Delta$ for the maximum message delay. 
To be safe under worst-case network conditions, $\Delta$ has to be picked conservatively, i.e., much larger than the actual (unknown) message delay bound.
If a protocol that runs in lock steps, it must allocate $\Delta$ time for every step.
Then, most of the time is wasted on waiting for the next round to start rather than doing any useful work.  
Only recently, synchronous BFT SMR protocols deviated from the lock-step approach \cite{hanke2018dfinity,synchotstuff}.
In these protocols, most of the steps are non-blocking, namely, replicas move to the next step as soon as enough messages are received from the previous step. 

Secondly, classical BA/BB protocols \cite{dolev1983authenticated} tend to optimize their \emph{worst-case} latency.
Because the worst-case number of rounds required is $f+1$ for tolerating $f$ faults \cite{fischer1982lower} and $f$ is typically assumed to be linear in $n$, any BA/BB protocol will inevitably have a poor worst-case latency as $n$ increases.
However, in contrast, BFT SMR protocols typically care about the \emph{good-case}, in which a stable honest leader stays in charge and drives consensus on many decisions. 
While a line of work studies \emph{expected-case} \cite{feldman1988optimal,katz2006expected,abraham2019synchronous} latency, this metric is still very different from the \latency in SMR, because it essentially analyzes the expected number of times the protocol changes its leader.

Another relatively minor disconnect lies in the ``life cycle'' of the protocol.
Theoretical BA/BB considers consensus on a single value and require all replicas to \emph{halt} or \emph{terminate} after agreeing on this single value.
In contrast, practical SMR protocols are intended to run forever; replicas \emph{commit} or \emph{decide} on an ever-growing sequence of values.

Motivated by the above considerations, we argue that 
there is a need for a more refined 
model for BA and BB 
%
to better capture practice. 
First, lock-step execution should not be assumed
.
Latency
should be measured in time as opposed to rounds.
Furthermore, for a more accurate characterization of latency, we adopt the separation between the conservative bound $\Delta$ and the actual (unknown) bound $\delta$ as suggested in \cite{ierzberg1989efficient, pass2017hybrid}.
Finally, instead of measuring only the traditional \emph{worst-case} latency to terminate, we use the \latency to commit as the main metric, which is defined as follows. 
\begin{definition}[\latency for Byzantine fault tolerant state machine replication]\label{def:goodcase}
    The \latency is the maximal latency (over all adversarial strategies) until all honest replicas commit given an honest leader is in charge.
\end{definition}

From the \latency definition of BFT SMR above, it is natural to define the {\em \latency for BB} by replacing the good leader property  with \emph{good sender}: the designated sender is honest.
Similarly, we define the {\em \latency for BA} by replacing the good leader property  with \emph{good input}: all honest replicas have the same input. 
The good input property also ties back to the good-case of BFT SMR because if the leader is honest, then all honest replicas receive the same input from the leader.

The main goal of our paper is to develop BA and BB protocols with near-optimal \latency and apply them to BFT SMR protocols under the synchronous and authenticated setting.
First, noticing that for $f<n/3$, one can use any partially synchronous or asynchronous protocol (possibly in unauthenticated setting) to achieve a \latency that only consists $\delta$, since these protocols can proceed in network speed under this setting.
As mentioned, in order to move beyond $f<n/3$ and to obtain the optimal resilience $f<n/2$ for BA and BFT SMR, we need to assume a synchronous \cite{dwork1988consensus} and authenticated \cite{fischer1990easy} setting.
We first restate a lower bound of $\Delta$ on the \latency from Sync HotStuff \cite{synchotstuff} which closely follows \cite{dwork1988consensus}. 

\begin{restatable}{theorem}{lowerbound}
\label{thm:lb}
Any Byzantine agreement or broadcast protocol that is resilient to $f \geq n/3$ faults must have a \latency at least $\Delta$.
\end{restatable}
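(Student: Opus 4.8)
The plan is a classical three-execution indistinguishability argument in the style of Dwork--Lynch--Stockmeyer that underlies the Sync HotStuff bound. Suppose, for contradiction, that there is a BA (resp.\ BB) protocol tolerating $f \ge n/3$ faults in which, in every good-case execution, all honest replicas decide strictly before time $\Delta$. Since $f \ge n/3$ we may partition the $n$ replicas into three non-empty groups $P_0, P_1, P_2$, each of size at most $f$ (for BB, place the designated sender in $P_0$). The hypothesis $f \ge n/3$ is used exactly here: it is what lets an entire group be corrupted.

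First I would build execution $E_0$: corrupt $P_2$ and make it completely silent; the replicas $P_0 \cup P_1$ are honest, all with input $0$ (for BB, the honest sender broadcasts $0$); the adversary delivers every honest-to-honest message almost instantly. This is a good-case execution, so by validity and the assumed latency bound every honest replica, in particular every replica in $P_1$, decides $0$ at some time $t < \Delta$, and its view up to that time consists only of messages exchanged within $P_0 \cup P_1$. Execution $E_1$ is symmetric with the roles of $P_1$ and $P_2$ swapped: corrupt $P_1$ and keep it silent, let $P_0 \cup P_2$ be honest with input $1$ (sender broadcasts $1$), deliver messages quickly; then every replica in $P_2$ decides $1$ before time $\Delta$, having seen only messages within $P_0 \cup P_2$.

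Then I would build execution $E_2$: corrupt $P_0$ (which, for BB, contains the sender, so validity imposes no constraint); $P_1$ and $P_2$ are honest, with $P_1$ holding input $0$ and $P_2$ holding input $1$ (for BB both simply await the sender). The adversary delays every message between $P_1$ and $P_2$ by $\Delta$, which is legal since $\Delta$ is the maximum delay. The corrupted group $P_0$ runs two simulations in its head: toward $P_1$ it sends exactly the messages that honest $P_0$ sends in $E_0$, and toward $P_2$ exactly the messages that honest $P_0$ sends in $E_1$. The key claim is that this is internally consistent and that, up to the decision times identified above, the view of $P_1$ in $E_2$ equals its view in $E_0$ and the view of $P_2$ in $E_2$ equals its view in $E_1$; this follows by induction on time, because in $E_0$ the group $P_1$ never hears from $P_2$ (corrupt-and-silent there, delayed here), and symmetrically in $E_1$, so the only place $E_0$ and $E_1$ differ, the inputs of $P_1$ versus $P_2$, never propagates to the other honest group before it has already decided. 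Hence in $E_2$ the honest replicas in $P_1$ decide $0$ while those in $P_2$ decide $1$, contradicting agreement; therefore the good-case latency is at least $\Delta$.

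I expect the main obstacle to be making the indistinguishability step precise: one must formalize the simultaneous two-faced simulation by $P_0$ in the non-lock-step, continuous-time model (defining the delivery schedules, doing the induction on message-receipt events, and checking that $P_0$'s behavior toward $P_1$ is a function only of the messages it legitimately receives in the simulated $E_0$, none of which originate from $P_2$), and verify that the adversary in $E_2$ can realize the prescribed delay pattern within the $\Delta$ bound while still reproducing the fast schedules used inside $E_0$ and $E_1$. The BA and BB cases differ only cosmetically --- ``all honest replicas have input $v$'' versus ``the honest sender broadcasts $v$'' in $E_0, E_1$, and ``honest replicas hold mixed inputs'' versus ``the sender is the corrupted party'' in $E_2$ --- with validity invoked only in $E_0, E_1$ and agreement only in $E_2$ in both cases.
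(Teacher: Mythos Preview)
Your proposal is correct and essentially identical to the paper's proof: both partition the replicas into three groups of size at most $f$, construct two good-case executions with a silent Byzantine group and opposite inputs, and then a third execution where the remaining group is two-faced and cross-group messages take the full $\Delta$, yielding indistinguishability before time $\Delta$ and hence an agreement violation. Your labeling $P_0,P_1,P_2$ maps directly to the paper's $R,P,Q$, and your treatment of the BB variant (placing the sender in the eventually-corrupted group $P_0$) matches the paper's as well.
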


It is worth mentioning that a lower bound in the earlier work~\cite{attiya1991bounds} implies a $\Delta$ lower bound on \latency for crash-tolerant agreement with $f\geq n/2$ faults, but it does not directly implies the lower bound above with $f\geq n/3$ Byzantine faults.



The state of the art \latency of BA/BB protocols in the synchronous authenticated setting is implied by Sync HotStuff \cite{synchotstuff}. Sync HotStuff is a BFT SMR protocol with \latency of $2\Delta$, and a single-shot version of Sync HotStuff naturally implies a BA/BB protocol with \latency of $2\Delta$. In fact, the authors in \cite{synchotstuff} conjectured that $2\Delta$ is the optimal \latency possible. 
In this paper, we refute this conjecture and present the following main results.

\paragraph{Byzantine Agreement, Broadcast, and State Machine Replication with Near-optimal Good-case Latency of $\Delta+2\delta$.}
A key contribution of this paper is the first Byzantine agreement protocol \ba in the synchronous authenticated setting with near-optimal \latency of $\Delta+2\delta$ and optimal resilience of $f<n/2$. 
We also obtain a Byzantine broadcast protocol \bb with near-optimal \latency of $\Delta+2\delta$ after small modifications.
By proposing protocols \ba and \bb with \latency of $\Delta+2\delta$, we improve the gap between the upper bound and lower bound on \latency for Byzantine agreement and broadcast.
We then give a state machine replication protocol named \smr that has \latency of $\Delta+2\delta$ and optimal resilience of $f<n/2$.

\paragraph{Extensions with near-optimal \latency under weaker models.}
To make the synchronous model more practical, we consider two types of additional fault suggested in the literature  named mobile link failures and mobile sluggish faults.
The mobile link failure model \cite{schmid2009impossibility, biely2011synchronous} assumes a certain fraction of send and receive links can be down at each replica. 
The mobile sluggish fault model \cite{chan2018pili, guo2019synchronous, synchotstuff} considers the slow connection of some honest replicas whose message sending and receiving does not respect the assumed delay bound $\Delta$.
We show that our protocols can be extended to tolerate the mobile link failures and mobile sluggish faults.
The \latency becomes $2\Delta+4\delta$ under mobile link failures, and we prove a new lower bound of $2\Delta$.
With mobile sluggish faults, the \latency becomes $\Delta+4\delta$.

\section{Preliminary}
\label{sec:pre}

We consider $n$ replicas in a reliable, authenticated all-to-all network, where up to $f$ replicas can be malicious and behave in a Byzantine fashion, and rest of the replicas are honest.
We assume standard digital signatures and public-key infrastructure (PKI). 
We use $\langle x \rangle_p$ to denote a signed message $x$ by replica $p$.

\paragraph{Pessimistic message delay bound $\Delta$ and actual message delay bound $\delta$.}
In this paper, we consider a synchronous system, where the message delay is bounded.
Let $\delta$ denote the actual upper bound of the message delay in the network, so any message from a non-faulty sender will be delivered within $\delta$ time after being sent. 
The parameter $\delta$ is unknown to the protocol designer and used only in the model definition, not in any protocol.
The synchronous model assumes a known upper bound $\Delta$ for the message delay $\delta$, i.e., $\delta\leq \Delta$. 
In practice, the parameter $\Delta$ is usually conservative for safety reasons, and thus $\delta\ll\Delta$.
To put the new model into context, we remark on its natural connection to partial synchrony: in partial synchrony, there is also an unknown message delay bound $\delta$ but the protocol designer does not know any upper bound on $\delta$.

\paragraph{Non-lock-step and clock skew.}
Although we assume synchrony, unlike most synchronous protocols that require lock-step execution (replicas start and end each round at the same time), our protocols do not proceed in a lock-step fashion, following the recent synchronous BFT SMR protocols which deviated from the lock-step approach \cite{hanke2018dfinity,synchotstuff}.
We assume replicas have clock skew at most $\clock$, i.e., they start protocol at most $\clock$ apart from each other. 
The bounded clock skew can be guaranteed via any clock synchronization protocol, such as \cite{dolev1995dynamic, abraham2019synchronous}.
The above clock synchronization protocols can also handle bounded clock drifts.
For simplicity, we assume there's no clock drift.


\begin{definition}[Byzantine Agreement]
\label{def:ba}
A Byzantine agreement protocol provides the following three
guarantees.
    \begin{itemize}[noitemsep,topsep=0pt]
        \item Agreement. If two honest replicas commit value $b$ and $b'$ respectively, then $b=b'$.
        \item Termination. All honest replicas eventually commit and terminate.
        \item Validity. 
        If all honest replicas have the same input value, then all honest replicas commit on the value.
    \end{itemize}
\end{definition}

The {\em Byzantine broadcast} is defined by changing the validity to require that if the designated sender is honest, then all honest replicas commit on the sender's value.


\begin{definition}[Byzantine Fault tolerant State Machine Replication \cite{synchotstuff}]
\label{def:smr}
A Byzantine fault tolerant state machine replication protocol commits client requests as a linearizable log to provide a consistent view of the log akin to a single non-faulty server, providing the following two guarantees.
    \begin{itemize}[noitemsep,topsep=0pt]
        \item Safety. Honest replicas do not commit different values at the same log position. 
        \item Liveness. Each client request is eventually committed by all honest replicas.
    \end{itemize}
\end{definition}

\section{Byzantine Agreement/Broadcast with Near-optimal Good-case Latency}
\label{sec:ba}

We first present a synchronous Byzantine agreement protocol \ba with near-optimal \latency of $\Delta+2\delta$ and optimal resilience of $f<n/2$. 
Without loss of generality, we assume  $n=2f+1$.
Our protocol incurs a \latency of $\Delta+2\delta$, which is near-optimal due to the $\Delta$ lower bound (Theorem \ref{thm:lb}).
The protocol can be naturally extended to a protocol \bb that solves Byzantine broadcast with near-optimal \latency.
Our protocols improves the gap for the \latency of synchronous BA and BB and refutes a conjectured $2\Delta$ lower bound in Abraham et al.~\cite{synchotstuff}.

\subsection{Intuition}

\begin{floatingfigure}[r]{0.4\linewidth}
	\includegraphics[width=0.4\linewidth]{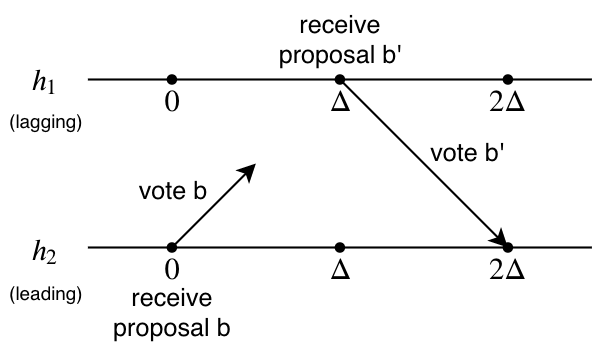}
	\caption{Graphical representation of the intuition for the conjectured $2\Delta$ lower bound.}
\label{fig:lower-bound}
\end{floatingfigure}

We start by presenting the rationale of the conjectured $2\Delta$ lower bound~\cite{synchotstuff} and how our protocol disproves it. 
We present the intuition in the context of Byzantine broadcast for simplicity, but similar arguments hold for Byzantine agreement as well.
The conjecture argues that two $\Delta$ periods are needed for the following reasons. 
First, now that we have departed from lock-step execution, two honest replicas may be ``out-of-sync'' by $\Delta$ time.
For example, a Byzantine leader quickly delivers its first message to one replica but takes $\Delta$ time to deliver its first message to another replica. 
Intuitively, this causes a \emph{lagging} replica to reach a point in the protocol $\Delta$ time after a \emph{leading} replica. 
Second, any message sent by an honest replica can take up to a $\Delta$ time to arrive at another replica by the synchrony assumption. 
In order to make sure that the sender did not equivocate (i.e., send different values to different replicas), a leading replica $h_2$ needs to wait for $2\Delta$ time before it can hear from a lagging replica $h_1$ what $h_1$ received from the sender (cf. Figure~\ref{fig:lower-bound}). 
Lastly, replicas seem to have no way to tell whether they are leading or lagging, so all replicas wait for $2\Delta$ time to ensure the absence of a sender equivocation. 


Note that the above intuition of the conjecture has an implicit assumption: if the sender equivocates, we want all replicas, leading or lagging, to detect sender's equivocation.
This is where the intuition of the conjecture errs and what our protocol relies on to get below $2\Delta$: we will make equivocation detection \emph{asymmetric}.
To elaborate, if we reduce the waiting period to $1\Delta$, a lagging replica $h_1$ can still learn from a leading replica $h_2$ about what $h_2$ receives from the sender, but not vice versa. 
In other words, a lagging replica can still detect sender equivocation (if there's any) but the most leading replica may not.
Hence, the most leading replica may commit a value despite that the sender equivocated.
But note that, all other honest replicas have detected equivocation and hence, have not committed.
As long as we carefully craft the rest of the protocol to make all other honest replicas eventually commit the same value as the leading replica, the protocol is safe.
Also note that this is the Byzantine-sender scenario, so the rest of the protocol does not have to meet the $1\Delta$ deadline.

\medskip
\subsection{\ba Protocol under Synchrony}
\label{sec:ba:protocol}

We now describe the \ba presented in Figure~\ref{fig:ba:sync}, which starts at Step~\ref{ba:step:exchange} and proceeds in an event-driven manner where a step is executed if certain conditions are satisfied.
For simplicity, we first assume that all replicas start the protocol at the same time $t=0$, and our results can be easily extended to the case where there is clock skew (see the discussion later in this section). 
Initially, all replicas have an input, and set their locked value $b_{lck}$ to be some default value $\bot$.
When protocol starts, all replicas first sign and broadcast its input value, and try to form a proposal containing $f+1$ signed messages of an identical value (Step~\ref{ba:step:exchange}). 
When any replica $i$ forms a proposal of value $b$ from the previous step or receives a valid proposal of value $b$ from other replicas, it forwards the proposal to detect conflict (Step~\ref{ba:step:forward}).
%
Due to the non-lock-step execution of our protocol, it is possible that a replica receives the proposal forwarded by some other replica first, then forms its own proposal.
Step~\ref{ba:step:forward} may be executed multiple times, if the replica observes multiple valid proposals.
If no valid proposal is ever formed or received, all replicas do nothing until time $4\Delta$ and invoke a BA at Step~\ref{ba:step:ba} with input $b_{lck}=\bot$, which ensures the agreement and termination of the protocol.
Otherwise, after the replica forwards the proposal, it locally starts a timer called $\texttt{vote-timer}$ to wait for $\Delta$ time period. 
If the timer expires and the replica does not receive any conflicting proposal containing a different value $b'\neq b$, it broadcasts a \vote message for the proposal (Step~\ref{ba:step:vote}). 
Once the replica gathers $f+1$ distinct \vote messages for the proposal containing value $b$ within time $3\Delta$, it broadcasts these \vote messages, sets locked value $b_{lck}=b$ and commits the value $b$. 
If the $f+1$ distinct \vote messages for value $b$ are received later than $3\Delta$, the replica only sets locked value $b_{lck}=b$ without committing the value (Step~\ref{ba:step:commit}).
As will be proved in Theorem \ref{thm:ba}, in the good-case, every honest replica will receive $f+1$ \vote messages within $3\Delta$ time. 
For other cases, if some replica commits to a value by time $3\Delta$, since it forwards the $f+1$ \vote messages to all replicas, all replicas will set $b_{lck}$ to the committed value within $4\Delta$ time.
%
At time $4\Delta$, the replica initiates an instance of Byzantine agreement with its locked value as the input, and any replica that has not committed yet will commit on the output of the agreement (Step~\ref{ba:step:ba}).

\begin{figure}[tb]
    \centering
    \begin{mybox}
    Initially, every replica $i$ has an input $b_i$, starts the protocol at the same time $t=0$, and sets $\locked=\bot$.
\begin{enumerate}
    \item\label{ba:step:exchange} \textbf{Propose.} Sign and send the input value $b_i$ to all others. 
    Once receiving $f+1$ distinct signed messages of the same value $b$, form a proposal $B$ with these messages as $B=\{\langle b\rangle_j\}_{f+1}$.

    \item\label{ba:step:forward} \textbf{Forward.} 
    Upon forming or receiving a valid new proposal $B$ containing $f+1$ distinct signed messages of the same value $b$, 
    forward $B$ to all other replicas, 
    set $\texttt{vote-timer}_B$ for proposal $B$ to $\Delta$ and start counting down.
    
    \item\label{ba:step:vote} \textbf{Vote.} 
    When $\texttt{vote-timer}_B$ for proposal $B$ containing value $b$ reaches $0$,
    if the replica does not receive another valid proposal $B'$ containing $f+1$ distinct signed messages of a different value $b'\neq b$, 
    it broadcasts a vote in the form of $\langle \texttt{vote}, B \rangle_i$.
    
    \item\label{ba:step:commit} \textbf{Commit.}
    Upon collecting $f+1$ distinct signed votes $\langle \texttt{vote}, B \rangle$ of a valid proposal $B$ containing value $b$ at time $t$, 
    (i) if $t\leq 3\Delta$, it broadcasts these $f+1$ votes, sets $\locked=b$, and commits $b$, 
    (ii) if $t> 3\Delta$, it sets $\locked=b$.
    
    
    
    \item\label{ba:step:ba} \textbf{Byzantine agreement.}
    At time $4\Delta$, invoke an instance of Byzantine agreement with $\locked$ as the input. 
    If not committed, commit on the output of the Byzantine agreement. 
    Terminate.

\end{enumerate}
    \end{mybox}
    \caption{\ba Protocol under the Synchronous Model}
    \vspace{-1em}
    \label{fig:ba:sync}
\end{figure}


\paragraph{Why does it suffice to wait for $\Delta$ time before sending \vote?} Consider any two honest replicas $h$ and $h'$ who receive conflicting proposals containing values $b$ at time $t$ and $b'$ at time $t'$ respectively. Without loss of generality, suppose $t \leq t'$. Observe that $h$'s forwarded proposal will arrive at $h'$ by $t+\Delta \leq t'+\Delta$, which is within the $\Delta$ waiting period of $h'$. Thus, $h'$ will not vote and will not commit.
This argument holds for any two pairs of honest replicas, and hence only the value of the proposal first voted by any honest replica may potentially be committed by a set of honest replicas at Step~\ref{ba:step:commit}. 
Now these committed honest replicas will forwards the $f+1$ distinct votes to all replicas no later than time $3\Delta$. 
All other honest replicas will receive these by time $4\Delta$ and will set $b_{lck}$ to the committed value. Thus, in the Byzantine agreement at Step~\ref{ba:step:ba}, all honest replicas will start with the same value, and the output of BA will be the committed value to ensure agreement among all honest replicas.

\paragraph{Remark on the Byzantine agreement primitive.}
We can plug in any Byzantine agreement protocol that tolerates $f<n/2$ faults and satisfies standard Byzantine agreement definition (Definition \ref{def:ba}) at Step~\ref{ba:step:ba}.
Note that the latency of the BA protocol does not affect the \latency of \ba protocol, since in the good-case, then all honest replicas can commit the value at Step \ref{ba:step:commit} of the protocol before invoking the BA (Theorem \ref{thm:ba}). 
For the same reason, it is fine to plug in a lock-step BA protocol since the poor latency of lock-step BA does not affect the \latency.
Protocol \ba proceeds in a non-lock-step fashion under the good-case.
With foresight, we also note that when we apply this protocol to implement SMR, the BA will be replaced by (roughly speaking) subsequent iterations of Step \ref{ba:step:forward}--\ref{ba:step:commit} where each iteration uses a new leader.

\paragraph{When a clock skew $\clock$ exists.}
Now we show that how to extend the results to the case when there exists a clock skew $\clock$ so that replicas may start the protocol at times at most $\clock$ apart from each other.
Since any replica may start the protocol at most $\clock$ earlier than any other replica, it will receive $f+1$ \vote messages within $3\Delta+\clock$ in the good-case, and therefore the forwarded \vote messages reaches all honest replicas within $4\Delta+\clock$ at their local time. 
Therefore, the parameter $3\Delta$ in the conditions at Step \ref{ba:step:commit} is replaced with $3\Delta+\clock$, and the parameter $4\Delta$ at Step \ref{ba:step:ba} is replaced with $4\Delta+\clock$.
Due to clock skew, replicas may invoke the BA at Step \ref{ba:step:ba} at times at most $\clock$ away from each other. Therefore, the BA primitive also need to tolerate up to $\clock$ clock skew.
For instance, any lock-step BA can do so by setting each round duration to be $2\Delta$ and use a clock synchronization algorithm \cite{dolev1995dynamic, abraham2019synchronous} to enforce the lock-step synchrony.
Since the value of $\clock$ is unknown, we can assume $\clock=\Delta$ as the worst-case clock skew.

\subsection{Correctness of Protocol \ba}
\label{sec:ba:correctness}

\begin{lemma}\label{lem:ba:safety}
    If an honest replica commits value $b$ at Step~\ref{ba:step:commit}, then 
    (i) no honest replica commit any other value $b'\neq b$ at Step~\ref{ba:step:commit}, and
    (ii) all honest replicas set $\locked=b$ before invoking the Byzantine agreement at Step~\ref{ba:step:ba}.
\end{lemma}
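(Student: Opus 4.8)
The plan is to derive both parts from a single invariant: every \vote message ever sent by an honest replica is for a proposal carrying one and the same value, call it $b^\star$. Granting this, the hypothesis that some honest replica commits at Step~\ref{ba:step:commit} means it collected $f+1$ distinct \vote messages for a valid proposal $B$ of some value $b$; since $n=2f+1$ and at most $f$ replicas are Byzantine, at least one of those votes is honest, so $b=b^\star$ and in particular $b^\star$ is well defined. Part~(i) is then immediate: by the same counting, any value committed at Step~\ref{ba:step:commit} must equal $b^\star$, so two honest replicas cannot commit different values there.

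The core is the invariant, which I would prove by contradiction. Suppose honest $g$ votes for a proposal $B$ of value $b$ and honest $g'$ votes for a proposal $B'$ of value $b'\neq b$. Let $t_g$ and $t_{g'}$ be the local times at which $g$ and $g'$ first observe $B$ and $B'$ respectively and execute Step~\ref{ba:step:forward}; their \texttt{vote-timer}s then fire at $t_g+\Delta$ and $t_{g'}+\Delta$ respectively, and by the guard in Step~\ref{ba:step:vote} a replica casts its vote only if it has received no conflicting valid proposal by that moment. Assume without loss of generality $t_g\le t_{g'}$. At time $t_g$, $g$ forwards the valid proposal $B$ to all replicas (Step~\ref{ba:step:forward}), so $g'$ receives it by time $t_g+\delta\le t_{g'}+\delta\le t_{g'}+\Delta$, i.e., no later than when $g'$'s timer for $B'$ fires; since $B$ is a valid proposal of value $b\neq b'$, the guard forbids $g'$ from voting for $B'$, a contradiction. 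This timing inequality — which hinges on $\delta\le\Delta$ and on the fact that an honest voter always forwards the proposal it votes on — is the step I expect to require the most care.

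For part~(ii), let honest $h$ commit $b$ at Step~\ref{ba:step:commit}; by case~(i) of that step this happens at some local time $t\le 3\Delta$, and $h$ then broadcasts the $f+1$ \vote messages for the value-$b$ proposal $B$ at time $t$. Every honest replica receives all of them by time $t+\delta\le 3\Delta+\delta\le 4\Delta$, hence before invoking the Byzantine agreement at Step~\ref{ba:step:ba}; upon holding $f+1$ distinct votes for $B$ it executes case~(i) or case~(ii) of Step~\ref{ba:step:commit} (according to whether this happens by time $3\Delta$) and in either case sets $\locked=b$. It remains to note that no honest replica can overwrite $\locked$ with a different value before Step~\ref{ba:step:ba}: any such update requires $f+1$ distinct \vote messages for a proposal of a different value, which by the counting argument contains an honest vote and therefore forces that value to equal $b^\star=b$. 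Hence every honest replica holds $\locked=b$ when it invokes the Byzantine agreement. (With a clock skew $\clock$ present, the thresholds $3\Delta$ and $4\Delta$ become $3\Delta+\clock$ and $4\Delta+\clock$ as in Section~\ref{sec:ba:protocol}, and the argument is otherwise unchanged.)
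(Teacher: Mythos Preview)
Your proposal is correct and follows essentially the same approach as the paper: the core is the timing argument that if two honest replicas vote for proposals of different values, the earlier one's forwarded proposal reaches the later one before its \texttt{vote-timer} expires, and part~(ii) follows from the committer broadcasting its $f+1$ votes before $3\Delta$ so that all honest replicas receive them by $4\Delta$. Your formulation as a single invariant (``all honest votes are for the same value $b^\star$'') is a clean repackaging of what the paper states as ``there do not exist $f+1$ votes for any $b'\neq b$''; the two are equivalent via the pigeonhole step you spell out, and your explicit treatment of the possibility that $\locked$ is overwritten is a detail the paper leaves implicit.
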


\begin{proof}
    Suppose an honest replica commits $b$ at Step~\ref{ba:step:commit} after receiving $f+1$ \vote messages.
    Then at least one honest replica $h$ forwards the proposal $B$ containing $b$ at time $t$, and sends the \vote message at time $t+\Delta$.
    
    First, we show that for any value $b'\neq b$, there does not exist $f+1$ distinct signed \vote messages for proposal $B'$ that contains $b'$.
    For the sake of contradiction, suppose there exists $f+1$ \vote messages for $B'$.
    Then at least one \vote for $B'$ comes from an honest replica $h'$. 
    Let $t'$ denote the time when $h'$ forwards the proposal.
    If $t'\leq t$, then the forwarded proposal $B'$ from $h'$ containing the value $b'$ will reach $h$ no later than $t'+\Delta\leq t+\Delta$, which will prevent $h$ from sending the \vote message, a contradiction.
    If $t'> t$, then similarly the forwarded proposal $B$ from $h$ containing the value $b$ will reach $h'$ no later than $t+\Delta<t'+\Delta$, which will prevent $h'$ from sending the \vote message, a contradiction.
    The same argument applies for any honest replica that may have voted for proposal $B'$.
    Hence, there does not exist $f+1$ distinct signed \vote messages for proposal $B'$ that contains value $b'\neq b$, and no honest replica commits to any other value $b'\neq b$ at Step~\ref{ba:step:commit}.

    For part (ii), since an honest replica $h$ commits $b$ at time $t\leq 3\Delta$, and forwards the $f+1$ votes for proposal $B$ that contains $b$ to all replicas, all honest replicas receives $f+1$ votes no later than time $4\Delta$.
    Since there is no $f+1$ votes for other proposal that contains value $b'\neq b$, all honest replicas set $\locked=b$ before invoking the Byzantine agreement in Step~\ref{ba:step:ba}.
\end{proof}

\begin{theorem}\label{thm:ba}
    \ba protocol solves  Byzantine agreement in the synchronous authenticated setting with near-optimal \latency of  $\Delta+2\delta$ and optimal resilience of $f<n/2$.
\end{theorem}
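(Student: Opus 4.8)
The plan is to verify the three properties of Byzantine agreement (Definition~\ref{def:ba}) --- Agreement, Termination, Validity --- and separately to bound the \latency. The heavy lifting for Agreement has already been isolated into Lemma~\ref{lem:ba:safety}, so the proof will mostly be an orchestration of that lemma together with the guarantees of the BA primitive invoked at Step~\ref{ba:step:ba}.

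For \textbf{Agreement}, there are two cases. If some honest replica commits at Step~\ref{ba:step:commit}, then Lemma~\ref{lem:ba:safety}(i) says no honest replica commits a conflicting value there, and Lemma~\ref{lem:ba:safety}(ii) says every honest replica enters the Step~\ref{ba:step:ba} BA with $\locked = b$; by the Validity property of the BA primitive (all honest inputs equal $b$), its output is $b$, so every replica that did not already commit at Step~\ref{ba:step:commit} commits $b$ at Step~\ref{ba:step:ba}. If no honest replica commits at Step~\ref{ba:step:commit}, all honest replicas commit the output of the BA primitive, which agree by the BA Agreement property. \textbf{Termination} is immediate: every replica reaches Step~\ref{ba:step:ba} at (local) time $4\Delta$ (or $4\Delta+\clock$ with clock skew), invokes the BA primitive which terminates by its Termination property, and then terminates itself. \textbf{Validity} uses the good-case argument: if all honest replicas hold the same input $b$, then in Step~\ref{ba:step:exchange} every honest replica collects $f+1$ identical signed messages of $b$ and forms a proposal for $b$; no proposal for any $b'\neq b$ can ever be formed since that would need $f+1$ signed messages of $b'$, of which at most $f$ can come from Byzantine replicas. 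Hence no honest replica ever sees a conflicting proposal, every honest replica votes for $b$ after its $\Delta$ timer, and a timing chase shows all $f+1$ honest votes arrive early enough for everyone to commit $b$ at Step~\ref{ba:step:commit}; this also subsumes the Validity of the BA primitive in the residual case.

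The \textbf{\latency bound} is the quantitative core, and I expect the timing accounting to be the main obstacle --- in particular getting the constants right and confirming the commit happens strictly within the $3\Delta$ (resp. $3\Delta+\clock$) window rather than merely by $4\Delta$. The argument: suppose all honest replicas have the same input $b$ and start within a window of $\clock$; the last honest replica starts by time $\clock$ (measuring from the first). Actual delays are $\delta$, so by time $\clock+\delta$ every honest replica has received all $f+1$ honest signed inputs and formed a proposal for $b$; it forwards the proposal and starts its $\Delta$ timer, which expires by $\clock+\delta+\Delta$; since no conflicting proposal exists, it broadcasts $\langle\vote,B\rangle_i$ then, and these votes arrive within another $\delta$, so by $\clock + 2\delta + \Delta$ every honest replica has $f+1 = n-f$ votes. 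I would check this is $\le 3\Delta+\clock$ (true since $2\delta\le2\Delta$, with room to spare), so everyone commits at Step~\ref{ba:step:commit}(i). Measuring latency as time-from-start, the last committer commits by $\Delta + 2\delta$ after it began, and one concludes the good-case latency is $\Delta+2\delta$; near-optimality follows from the $\Delta$ lower bound of Theorem~\ref{thm:lb}. Optimal resilience $f<n/2$ holds because $n=2f+1$ makes $f+1$ both a majority quorum and the honest count, and we only ever rely on the BA primitive tolerating $f<n/2$.
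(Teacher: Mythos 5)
Your proof is correct and follows essentially the same route as the paper: Agreement via the two-case split using Lemma~\ref{lem:ba:safety} and the Validity/Agreement properties of the BA primitive, Termination from the primitive, and Validity and the latency bound by the same direct timing chase ($\delta$ to form the proposal, $\Delta$ wait, $\delta$ for votes). Your extra check that commits land within the $3\Delta$ (resp.\ $3\Delta+\clock$) window of Step~\ref{ba:step:commit}(i) is a detail the paper's proof states without spelling out, but it is the same argument.
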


\begin{proof}
    {\bf Agreement.}
    If all honest replicas commit at Step~\ref{ba:step:ba}, then due to the agreement property of the Byzantine agreement primitive, all honest replicas commit on the same value.
    Otherwise, there must be some honest replica that commits at Step~\ref{ba:step:commit}.
    Let $h$ denote the first honest replica that commits, and let $b$ denote the committed value.
    By Lemma \ref{lem:ba:safety}, no other value is committed by any honest replica at Step~\ref{ba:step:commit}, and all honest replicas set $\locked=b$ before invoking the Byzantine agreement primitive at Step~\ref{ba:step:ba}.
    Therefore, when invoking the BA primitive at Step~\ref{ba:step:ba}, the inputs of all honest replicas are the same. Then by the validity condition of the BA primitive, the output of the agreement is also $b$. Any honest replica that does not commit at Step~\ref{ba:step:commit} will commit on value $b$ at Step~\ref{ba:step:ba}. 
    
    {\bf Termination.}
    According to the protocol, honest replicas terminate at Step~\ref{ba:step:ba}, and they commit a value before termination.
    At time $4\Delta$, all honest replicas invoke an instance of Byzantine agreement. Termination follows from the termination property of this Byzantine agreement instance.
    
    {\bf Validity.}
    If all honest replicas initially have the same value $b$ as input, then we show that all honest replicas are able to commit $b$ at Step~\ref{ba:step:commit}.
    After Step~\ref{ba:step:exchange}, all honest replicas receive $f+1$ signed messages of value $b$ and form the same proposal $B$ after at most $\Delta$ time.
    Since there are at most $f$ Byzantine replicas, no other proposal $B'$ containing $f+1$ signed message for another value $b'\neq b$ can be formed.
    Therefore, during the $\Delta$ waiting period before Step~\ref{ba:step:vote}, no other proposal $B'$ will be received by any honest replica, and all honest replicas send the \vote messages at Step~\ref{ba:step:vote}. 
    Finally, the \vote messages reach all honest replicas after at most $\Delta$ time, leading to all honest replicas to commit on the value $b$ within $3\Delta$ time at Step~\ref{ba:step:commit}.
    
    {\bf Good-case latency.}
    In the good case, all honest replicas have the same input $b$.
    Therefore, all honest replicas form the same proposal at $\delta$ after receiving each other's value, and no other conflicting proposal can be formed. Then, all honest replica will wait for time $\Delta$ before sending the \vote message. Next, the above \vote messages reach all honest replicas after $\delta$ time,
    and all honest replicas commit on the sender's proposal at time $\Delta+2\delta$.
\end{proof}

\subsection{\bb under Synchrony}
\label{sec:bb}

The \ba protocol can be easily extended to Byzantine broadcast with small modifications. 
Recall that the \latency for Byzantine broadcast is defined when the designated sender is honest.
The protocol is given in Figure \ref{fig:bb:sync}. 
The main differences are (i) at Step~\ref{bb:step:propose}, the designated sender broadcasts its input, instead of the input exchange among all replicas in \ba, and (ii) the proposal at Step 2-4 contains only the value signed by the sender from Step~\ref{bb:step:propose}, instead of $f+1$ values signed by replicas.
As a result, the proofs are analogous to \ba and we omit them.
Note that \bb has resilience $f<n/2$, which is not optimal.

\begin{figure}[h]
    \centering
    \begin{mybox}
    Initially, every replica $i$ starts the protocol at the same time $t=0$, and sets $\locked=\bot$.
\begin{enumerate}[itemsep=0pt,topsep=0pt]
    \item\label{bb:step:propose} \textbf{Propose.} The designated sender $L$  with input $b$ sends $\langle \texttt{propose}, b\rangle_L$ to all  replicas.
    
    \item Step 2-5 is identical to Step~\ref{ba:step:forward}-\ref{ba:step:ba} of \ba, except that the proposal contains only one value signed by the leader instead of $f+1$ signed values, and replica detects conflict when receiving two different proposals signed by the leader at Step 3.
    
    
    

    
    
\end{enumerate}
    \end{mybox}
    \caption{\bb Protocol under the Synchronous Model}
    \vspace{-1em}
    \label{fig:bb:sync}
\end{figure}

\begin{theorem}\label{thm:ba}
    \bb protocol solves  Byzantine broadcast in the synchronous authenticated setting with near-optimal \latency of  $\Delta+2\delta$.
\end{theorem}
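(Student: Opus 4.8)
The plan is to mirror the proof of Theorem~\ref{thm:ba} for \ba almost verbatim, since \bb differs only in how a proposal is created and what counts as an equivocation: a proposal is now a single value signed by the designated sender $L$, and two honest replicas detect a conflict exactly when one holds $\langle\texttt{propose},b\rangle_L$ and the other (or the same one) holds $\langle\texttt{propose},b'\rangle_L$ with $b\neq b'$. The forward/vote/commit/Byzantine-agreement steps (Steps~2--5, i.e.\ the analogues of Steps~\ref{ba:step:forward}--\ref{ba:step:ba}) are unchanged, so all timing and quorum arguments carry over once ``proposal'' is reinterpreted.

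First I would establish the analogue of Lemma~\ref{lem:ba:safety}: if an honest replica commits $b$ at the commit step, then no honest replica commits any $b'\neq b$ there, and every honest replica sets $\locked=b$ before invoking the Byzantine agreement at Step~\ref{ba:step:ba}. The argument is identical to the \ba case. If an honest replica commits $b$ after collecting $f+1$ votes, some honest $h$ forwarded $\langle\texttt{propose},b\rangle_L$ at a time $t$ and broadcast its \vote at $t+\Delta$. Were there also $f+1$ distinct \vote messages for a conflicting $\langle\texttt{propose},b'\rangle_L$, at least one would come from an honest $h'$ who forwarded it at some time $t'$ and voted at $t'+\Delta$; whichever of $h,h'$ forwards later receives the other's forwarded proposal --- which genuinely bears $L$'s signature, by unforgeability --- within $\Delta$, hence before its own \texttt{vote-timer} expires, and therefore abstains, a contradiction. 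So only $b$ can accumulate $f+1$ votes. Since the committing replica forwards those $f+1$ votes by time $3\Delta$ (resp.\ $3\Delta+\clock$ with clock skew), every honest replica receives them and sets $\locked=b$ by time $4\Delta$ (resp.\ $4\Delta+\clock$).

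From here the three properties follow exactly as in Theorem~\ref{thm:ba}. For \textbf{agreement}, if some honest replica commits at the commit step then by the above all honest replicas enter Step~\ref{ba:step:ba} with input $\locked=b$, so by validity of the BA primitive its output is $b$ and every not-yet-committed replica commits $b$; otherwise agreement is inherited from the BA primitive's agreement property. For \textbf{termination}, all honest replicas invoke the BA primitive at time $4\Delta$ (resp.\ $4\Delta+\clock$) and terminate after committing, by the primitive's termination. For \textbf{validity}, if the sender $L$ is honest it sends $\langle\texttt{propose},b\rangle_L$ and, being honest, signs no other value, so no conflicting proposal ever exists; each honest replica receives the proposal within $\Delta$, forwards it, waits $\Delta$ seeing no conflict, votes for $b$, and collects $f+1$ votes within $3\Delta$ --- hence commits $b$ at the commit step. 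Finally, for the \textbf{good-case latency}, when $L$ is honest and the actual delay is $\delta$: $L$'s proposal reaches all honest replicas by time $\delta$, they forward and wait $\Delta$, broadcast votes by time $\Delta+\delta$, and the votes arrive by time $\Delta+2\delta$; so every honest replica commits $b$ at time $\Delta+2\delta$, before the BA primitive is invoked, which is within $\Delta$ of optimal by Theorem~\ref{thm:lb}.

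I do not expect a genuine obstacle. The only point needing care is that the asymmetric-equivocation argument of Lemma~\ref{lem:ba:safety} still applies when ``conflict'' is defined via two leader-signed values rather than two $(f+1)$-signature proposals --- but signature unforgeability makes a Byzantine sender unable to fabricate a proposal without its own signature, so the substitution is immediate --- and that the clock-skew adjustments $3\Delta\to 3\Delta+\clock$ and $4\Delta\to 4\Delta+\clock$ transfer unchanged. (Note that \bb inherits resilience $f<n/2$, which, as remarked in the text, is not optimal for broadcast.)
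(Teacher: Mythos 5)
Your proposal is correct and matches the paper's intent exactly: the paper itself omits the proof of this theorem, stating only that "the proofs are analogous to \ba and we omit them," and your write-up is precisely that analogous argument, with the right adjustments (single leader-signed proposal, equivocation as two distinct leader-signed values, unforgeability ensuring an honest sender yields no conflict). The timing, quorum, and lock-then-BA arguments all transfer as you describe, so there is nothing to add.
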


\section{State Machine Replication with Good-case Latency of $\Delta+2\delta$}
\label{sec:smr}

In this section, inspired by the techniques from protocol \ba and \bb, we construct a Byzantine fault tolerant state machine replication protocol \smr that has a \latency of $\Delta+2\delta$ and optimal resilience of $f<n/2$.

\smr protocol (Figure \ref{fig:smr:sync}) takes the stable leader approach that proceeds in views, which consists a steady state, and a view-change to replace a Byzantine leader.
In the steady state, the leader of the current view is responsible for making progress, where view is a integer that increments after each view-change. 
The leader of a view can simply be elected by round-robin, i.e., the replica $(v \mod n)$ where $v$ is the view number and $n$ is the number of replicas.
If the leader behaves maliciously or does not make progress, the replicas will blame the leader and start the view-change protocol to replace the leader.

We introduce some terminology that will be used in our protocol.

{\bf Block format, block extension and equivocation.}
Clients' requests are batched into blocks, and the protocol outputs a chain of blocks $B_1,B_2,...,B_k,...$ where $B_k$ is the block at height $k$.
Each block $B_k$ has the following format $B_k=(b_k, h_{k-1})$ where $b_k$ is a batch of new client requests and $h_{k-1}=H(B_{k-1})$ is the hash digest of the previous block at height $k-1$.
We say that a block $B_l$ {\em extends} another block $B_k$, if $B_k$ is an ancestor of $B_l$ according to the hash chaining where $l\geq k$.
We define two blocks $B_l$ and $B'_{l'}$ to be {\em equivocating}, if they are not equal and do not extend on another. 
As it will become clear in the protocol, we also define two proposals to be {\em equivocating}, if they are proposed and signed by the leader in the same view and both have $\fs\neq \bot$.
The block chaining simplifies the protocol in the sense that once a block is committed, its ancestors can also be committed.
    
{\bf Certificate, certificate ranking.}
A quorum certificate is a set of signatures on a block by a quorum of replicas, which consists of $f+1$ replicas out of $2f+1$ replicas for \smr.
We use $\fc_v(B_k)$ to denote a certificate for $B_k$ in view $v$, consisting of $f+1$ distinct signed \vote messages for block $B_k$. 
Certified blocks are ranked first by the views that they are created and then by their heights, that is, blocks with higher views have higher ranks, and blocks with higher height have higher ranks if the view numbers are equal.
We use $\fb_v$ to denote a blame certificate in view $v$, consisting of $f+1$ distinct \blame messages in view $v$.

Since the clients' requests are batched into blocks, the BFT SMR protocol achieves {\em safety} if honest replicas always commit the same block $B_k$ for each height $k$, and {\em liveness} if all honest replicas keep committing new blocks.
When it is clear in the protocol, a replica broadcasting a message means it sending the message to all other replicas; it does not mean invoking a Byzantine broadcast instance.

\subsection{\smr Protocol under Synchrony}

\begin{figure}[h!]
    \centering
    \begin{mybox}
    \textbf{Steady State Protocol for Replica $i$}

Let $v$ be the current view number and replica $L$ be the leader of the current view. The leader proposes a block every $\alpha$ time, where $\alpha$ is a parameter.
\begin{enumerate}
    \item\label{smr:propose} {\bf Propose.} The leader $L$ sends $\langle \texttt{propose}, B_k, \fs, v \rangle_L$ to all other replicas, 
    where $B_k=(b_k,h_{k-1})$ is a height-$k$ block, containing a batch of new client requests $b_k$ and a hash digest $h_{k-1}=H(B_{k-1})$ of a height-$(k-1)$ block $B_{k-1}$.
    For the first proposal in a new view after a view-change,
    $\fs$ is a set of $f+1$ distinct signed \status messages received during the view-change, 
    and $B_{k-1}$ is the highest certified block among blocks in $\fs$.
    Otherwise, $\fs=\bot$ and $B_{k-1}$ is the last block proposed by $L$.
    
    \item\label{smr:forward} {\bf Forward.} Upon receiving a valid new proposal $\langle \texttt{propose}, B_k, \fs, v \rangle_L$,
    forward the proposal to all other replicas.
    Once one of the following conditions is true: 
    (1) $\fs$ contains $f+1$ distinct signed \status messages, $B_k$ extends the highest certified block in $\fs$, and $B_k$ extends a chain known to the replica, or
    (2) $\fs=\bot$, $B_k$ extends the highest certified block known to the replica, and $B_k$ extends a chain known to the replica, perform the following.
    For any block in the chain that $B_k$ extends, broadcast the block if the replica hasn't forwarded it yet.
    Set $\texttt{vote-timer}_k$ to $\Delta$ and start counting down.

    \item\label{smr:vote} {\bf Vote.} When $\texttt{vote-timer}_k$ reaches $0$, if no equivocating blocks/proposals signed by $L$ or a blame certificate $\fb_v$ are received, send a \vote to all other replicas in the form of $\langle \vote, B_k, v \rangle_i$.
    
    \item\label{smr:commit} {\bf Commit.} 
    Upon receiving $f+1$ distinct signed \vote messages for block $B_k$, if no equivocating blocks/proposals signed by $L$ or a blame certificate $\fb_v$ are received, broadcast the \vote certificate $\fc_v(B_k)$ containing $f+1$ votes, and commit $B_k$ with all its ancestors.
\end{enumerate}

\textbf{View-change Protocol for Replica $i$}

Let $L$ and $L'$ be the leader of view $v$ and $v+1$ respectively.
\begin{enumerate}
    \item\label{smr:blame} {\bf Blame.} If less than $p$ valid blocks are committed in $6\Delta+(p-1)\alpha$ time in view $v$, broadcast $\langle \blame, v\rangle_i$. If equivocating blocks/proposals signed by $L$ are received, broadcast $\langle \blame, v\rangle_i$ and the two equivocating blocks/proposals.
    
    \item\label{smr:newview} {\bf New-view.}
    Upon receiving $f+1$ distinct signed $\langle \blame, v\rangle$ messages, form a \blame certificate $\fb_v$, broadcast $\fb_v$, abort all \texttt{vote-timer}(s) and stop sending \vote in view $v$.
    Wait for $2\Delta$ time and enter view $v+1$.
    Upon entering view $v+1$, send the new leader $L'$ a $\langle \status, B_{k'}, \fc_{v'}(B_{k'}), v \rangle_i$ message where $\fc_{v'}(B_{k'})$ is certificate for the highest certified block $B_{k'}$.
    (Ignore any block certified in view $\leq v$ that is received after this point.)
    If the replica is the new leader $L'$, wait for another $2\Delta$ time upon entering view $v+1$.
    
\end{enumerate}
    \end{mybox}
    \caption{\smr under the Synchronous Model}
    \label{fig:smr:sync}
\end{figure}

Here we describe the protocol \smr in Figure \ref{fig:smr:sync}.
In the steady state of the protocol, the leader $L$ of the current view $v$ can propose blocks chained by block hashes every $\alpha$ time, where $\alpha >0$ is a predefined parameter.
The steady state of \smr is similar to Step 1-4 of protocol \bb, and uses similar techniques such as proposal forwarding and the $\Delta$ waiting time before voting.
For the first proposal after a view-change, the leader proposes a block that extends the highest certified block known to the leader (from the status messages received). For other proposals, each proposed block extends the last proposed block (Step \ref{smr:propose}).
We assume that the honest replicas are always able to generate and propose valid proposal blocks.
Once any replica receives a new valid proposal from the leader $L$, it forwards the proposal (Step \ref{smr:forward}) for equivocation check. 
Due to the non-lock-step nature of our protocol, it is possible that a replica receives the proposal from the re-proposal of some other replica first. In this case, the replica also accept the proposal if it is valid. 
Then, once the replica has all blocks in the chain that the proposal block is extending, and the proposal block extends the highest certified block, it locally starts a timer called $\texttt{vote-timer}$ to wait for $\Delta$ time (Step \ref{smr:forward}).
When the timer expires and the replica does not receive any equivocating block signed by the leader $L$ or a blame certificate (containing $f+1$ \blame messages), it broadcast a \vote message in the form of $\langle \vote, B_k, v \rangle$ (Step \ref{smr:vote}). 
Once the replica gathers $f+1$ distinct \vote messages to form a certificate for the block $B_k$ without receiving a blame certificate (containing $f+1$ \blame messages) or equivocating blocks, it forwards the \vote certificate $\fc_v(b)$, commits the block $B_k$ and all its ancestor blocks (Step \ref{smr:commit}). The forwarding of \vote certificate notifies other replicas to update their highest certified block before entering the next view. 

When the leader is Byzantine, it can deviate from the protocol by either stalling or equivocating \footnote{In practice, the Byzantine leader may also censor clients' requests, i.e., always proposes blocks with its own requests or transactions. Any BFT SMR protocol should have a scheme to prevent such censoring. For our protocol, we can either do leader rotations regularly, or add another blame rule for censoring to replace the Byzantine leader. }. 
To ensure liveness, the view-change protocol will be triggered when a quorum ($f+1$ out of $2f+1$) of replicas discover the malicious behaviors of the current leader. 
If the current leader does not keep proposing valid blocks quick enough, any replica that does not commit blocks in time will blame the leader by sending a \blame message. The time period of $6\Delta+(p-1)\alpha$ is sufficient for $p$ proposals from an honest leader to be committed (see the proof of Theorem \ref{thm:smr:liveness}). 
Note that the above time period only occur during the view-change, and does not affect the \latency of the protocol when a honest leader is in charge.
If the leader equivocates, when any replica receives the equivocating blocks, it will also blame the leader by broadcasting a \blame message and the pair of equivocating blocks to help other replicas detect the equivocation. 
Once the replica collects $f+1$ \blame messages, it forms and broadcasts a \blame certificate, aborts the timer and stops voting in view $v$. 
Then the replica waits for a $2\Delta$ period before entering the next view $v+1$. The $2\Delta$ waiting period is for receiving the \vote certificate from other replicas. 
Then, when entering the new view, the replica sends the certificate of the highest known certified block to the new leader, and returns to the steady state. If a replica is the new leader, it needs to wait for another $2\Delta$ time period in order to receive the status messages sent by all honest replicas during view-change.

\subsection{Correctness of Protocol \smr}

We say a block $B_k$ is committed directly, if the replica commits $B_k$ at Step \ref{smr:commit} by receiving $\fc_v(B_k)$ but no equivocating blocks or blame certificate $\fb_v$.
A block $B_k$ is committed indirectly, if $B_k$ is committed because it is the ancestor of a directly committed block.


\begin{lemma}\label{lem:smr:safety}
    If an honest replica directly commits a block $B_l$ in view $v$, 
    then a certified block that ranks no lower than $\fc_v(B_l)$ must equal or extend $B_l$.
\end{lemma}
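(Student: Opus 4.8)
The plan is to prove this by induction on the rank of the competing certified block, following the standard ``quorum intersection + locking'' argument adapted to the asymmetric $1\Delta$ voting rule. Let $B_l$ be directly committed in view $v$ by some honest replica; this means that replica received $f+1$ \vote messages for $B_l$ (all in view $v$) without seeing an equivocating block/proposal signed by $L$ or a blame certificate $\fb_v$. The key first observation is that at least one \emph{honest} replica $h$ voted for $B_l$ in view $v$; since $h$ voted, $h$ let its $\texttt{vote-timer}_l$ expire without seeing an equivocating block signed by $L$ or a blame certificate $\fb_v$. Using the same asymmetric-$\Delta$ argument as in Lemma~\ref{lem:ba:safety}: if any honest replica had voted for an equivocating block $B'_{l'}$ in view $v$, its forwarded proposal would have reached $h$ (or vice versa) within the relevant $\Delta$ window, so no such vote exists, and hence no certificate $\fc_v(B'_{l'})$ for a block equivocating $B_l$ can be formed. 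This establishes the claim for all blocks certified \emph{in view $v$}: any such block must equal or extend $B_l$ (it cannot equivocate it, and by the chaining structure any view-$v$ certified block at height $\ge l$ extending the leader's chain must extend $B_l$, while one at height $<l$ is extended by $B_l$, but the statement only concerns blocks ranking no lower than $\fc_v(B_l)$, so those at lower height are irrelevant).

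Next I would handle blocks certified in views $v' > v$ by induction on $v'$. Consider the smallest view $v' > v$ in which a block $B'_{l'}$ is certified, and suppose for contradiction that $B'_{l'}$ does not equal or extend $B_l$. A certificate $\fc_{v'}(B'_{l'})$ requires $f+1$ votes in view $v'$, hence a vote from some honest replica $h'$, which requires that $h'$ received a valid view-$v'$ proposal whose $\fs$ contains $f+1$ distinct \status messages and such that $B'_{l'}$ extends the highest certified block in $\fs$. These $f+1$ \status messages and the $f+1$ \vote messages forming $\fc_v(B_l)$ must intersect in an honest replica $h^*$. The crux is then to argue that $h^*$'s \status message in the view-change into $v'$ reported a certified block ranking at least as high as $\fc_v(B_l)$: since $h^*$ voted for $B_l$ in view $v$ (being in the intersection), by Step~\ref{smr:commit}/\ref{smr:newview} $h^*$ must have held a certificate for $B_l$ (or something ranking higher) when it sent its \status message for the transition into $v'$ --- here I need the timing guarantee that the $2\Delta$ wait in Step~\ref{smr:newview} ensures every honest replica that voted receives $\fc_v(B_l)$ before entering $v'$, so $h^*$'s reported highest certified block ranks $\ge \fc_v(B_l)$. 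By the induction hypothesis applied to that reported block (which is certified in some view in $[v, v')$), it equals or extends $B_l$. Therefore the highest certified block in $\fs$ ranks $\ge \fc_v(B_l)$ and equals-or-extends $B_l$, so $B'_{l'}$, which extends it, also extends $B_l$ --- contradiction.

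The main obstacle I anticipate is the timing argument in the inductive step: I must carefully show that an honest replica $h^*$ that voted for $B_l$ in view $v$ necessarily \emph{knows} a certificate $\fc_v(B_l)$ (or higher) before it sends its \status message upon entering $v'$. This is where the $2\Delta$ wait in the \textbf{New-view} step and the forwarding of \vote certificates in the \textbf{Commit} step must be invoked precisely: since $B_l$ was directly committed, $f+1$ votes for it were aggregated and (by the commit rule) the certificate is broadcast; combined with the fact that the committing replica does so before any honest replica that will move to view $v'$ finishes its $2\Delta$ wait, every honest voter receives $\fc_v(B_l)$ in time. A secondary subtlety is ensuring the rank comparison is clean when $v' = v$ versus $v' > v$ and when heights differ within the same view --- I would state a small sub-claim that within a single view an honest leader's proposals form a single chain, so two view-$v$ certified blocks are always chain-related, which reduces the view-$v$ case to the non-equivocation argument above. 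With these pieces the induction closes and the lemma follows.
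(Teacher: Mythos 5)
Your treatment of the in-view-$v$ case is essentially the paper's: pick an honest voter for $B_l$ and an honest voter for the competing block, and whichever forwarded its proposal earlier kills the other's vote within the $\Delta$ window. The cross-view case, however, contains a genuine error: you intersect the $f+1$ \status messages in $\fs$ with the $f+1$ \vote messages forming $\fc_v(B_l)$ and conclude the intersection contains an \emph{honest} replica $h^*$. With $n=2f+1$ and quorums of size $f+1$, two quorums are only guaranteed to intersect in a single replica, and that replica may be Byzantine; the ``quorum intersection contains an honest party'' step needs quorums of size $>\tfrac{n+f}{2}$, which is exactly what minority-corruption synchronous protocols do not have. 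So as written, $h^*$ may not exist, and the inductive step collapses.

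The fix is the route the paper takes, and it makes the intersection unnecessary. One shows that \emph{every} honest replica receives $\fc_v(B_l)$ (and, via Lemma~\ref{lem:smr:chain}, the chain $B_l$ extends) before entering view $v+1$: the directly committing replica broadcasts $\fc_v(B_l)$ at its commit time $t_c$, and if some honest replica entered $v+1$ before $t_c+\Delta$ it would have broadcast $\fb_v$ before $t_c-\Delta$ (because of the $2\Delta$ wait in Step~\ref{smr:newview}), which would reach the committing replica by $t_c$ and prevent the commit --- contradiction. Then one only needs that $\fs$, being $f+1$ \emph{distinct} signed \status messages with at most $f$ Byzantine replicas, contains at least one honest status reporting a certificate ranking no lower than $\fc_v(B_l)$; combined with the already-established view-$v$ claim and induction on the view number (covering both the $\fs\neq\bot$ first proposal and the $\fs=\bot$ subsequent proposals, where honest replicas vote only for extensions of their highest certified block), the lemma follows. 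You correctly identified this timing argument as the crux, so the gap is localized to the invalid intersection step; replace it with the ``all honest replicas lock $\fc_v(B_l)$ before view $v+1$'' claim and the proof goes through.
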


The proof of Lemma \ref{lem:smr:safety} is is very similar to that of 
Lemma~\ref{lem:ba:safety} and is presented in Appendix \ref{app:proof:smr:safety} due to lack of space.

\begin{theorem}[Safety]\label{thm:smr:safety}
    Honest replicas always commit the same block $B_k$ for each height $k$.
\end{theorem}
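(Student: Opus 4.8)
The plan is to derive safety from Lemma~\ref{lem:smr:safety} together with a standard ``highest certified block is carried forward across view-changes'' argument. First I would reduce the theorem to the directly-committed case: since every committed block is either directly committed or an ancestor of a directly committed block, and since ancestry is determined by the hash chain, it suffices to show that if two honest replicas directly commit blocks $B_l$ (in view $v$) and $B'_{l'}$ (in view $v'$) with, say, $v \le v'$, then one of $B_l, B'_{l'}$ extends the other; this immediately forces agreement at every common height, and then agreement at all heights follows by chaining. Within a single view, direct commit of two equivocating blocks is impossible by the same $\Delta$-waiting / proposal-forwarding argument used in Lemma~\ref{lem:ba:safety} (a leading honest voter's forwarded proposal reaches every lagging honest replica within its $\Delta$ window, so no two honest replicas can vote for equivocating blocks, hence no two equivocating certificates in the same view can form), so the interesting case is $v < v'$.

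For the cross-view case I would argue by induction on $v' - v$. By Lemma~\ref{lem:smr:safety}, every certified block of rank $\ge \fc_v(B_l)$ — in particular every block certified in any view $> v$, and every higher block certified in view $v$ — equals or extends $B_l$. The key step is to show this invariant is preserved through the view-change into view $v+1$: when the new leader $L'$ collects $f+1$ \status messages, at least one comes from an honest replica, and I must argue that the honest replica who directly committed $B_l$ (or some honest replica holding a certificate of rank $\ge \fc_v(B_l)$) reports a sufficiently high certified block. Here I would use the $2\Delta$ waiting periods in Step~\ref{smr:newview}: the committing replica broadcasts $\fc_v(B_l)$, and the $2\Delta$ delay before entering view $v+1$ plus the extra $2\Delta$ the new leader waits ensures $L'$ receives $\fc_v(B_l)$ (or a certificate ranked at least as high) before forming its proposal set $\fs$. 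Consequently the first proposal of view $v+1$ extends the highest certified block in $\fs$, which ranks $\ge \fc_v(B_l)$ and hence extends $B_l$; and the validity checks in Step~\ref{smr:forward} (honest replicas only vote for a post–view-change proposal extending the highest certified block in $\fs$) guarantee any block that gets certified in view $v+1$ extends $B_l$. Applying the induction hypothesis to $B'_{l'}$'s view then closes the argument.

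The main obstacle I anticipate is the timing bookkeeping around the view-change: I need to verify carefully that the $2\Delta$ (for the committing/status-sending replica) plus $2\Delta$ (for the new leader) windows are genuinely enough for the new leader to learn of $\fc_v(B_l)$ even under worst-case clock skew $\clock$ and the asymmetry between leading and lagging replicas — i.e., that no honest replica enters view $v+1$ and gets its \status message processed by $L'$ before $\fc_v(B_l)$ has had time to propagate. A related subtlety is that the replica that directly committed $B_l$ might itself be the one whose \status message is ``late'': I would handle this by observing that \emph{any} honest replica that has seen $\fc_v(B_l)$ forwards it (Step~\ref{smr:commit} broadcasts the vote certificate), so it is the broadcast of the certificate, not the identity of a particular reporter, that matters, and the $2\Delta + 2\Delta$ budget covers this propagation. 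The rest — the single-view no-equivocation claim and the reduction to common heights — is routine and mirrors the \ba proofs.
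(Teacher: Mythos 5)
Your proposal is correct, and its core is exactly the paper's proof: reduce to the two directly committed blocks $B_l$ (view $v$) and $B'_{l'}$ (view $v'$), order them by certificate rank WLOG, apply Lemma~\ref{lem:smr:safety} to conclude $B'_{l'}$ equals or extends $B_l$, and read off agreement at height $k$ from the hash chain. The paper's proof stops there --- four lines. Everything in your second and third paragraphs (the induction on $v'-v$, the propagation of $\fc_v(B_l)$ through the $2\Delta$ view-change windows, the role of the honest \status message in $\fs$) is a partial re-derivation of Lemma~\ref{lem:smr:safety} itself rather than part of the theorem's proof: the lemma's statement already quantifies over \emph{every} certified block ranking no lower than $\fc_v(B_l)$, and since ranking is first by view, that includes all blocks certified in any view $v'>v$ with no further work. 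If you are citing the lemma, you get the cross-view case for free; if you intend to prove it, your sketch is essentially the appendix proof of the lemma. One refinement worth noting for that re-derivation: it does not suffice that \emph{some} honest replica (e.g., the committer) reports a high certified block in its \status message, because a Byzantine new leader chooses which $f+1$ \status messages to put in $\fs$ and will include only one honest one of its choosing; the argument must show that \emph{all} honest replicas receive $\fc_v(B_l)$ before entering view $v+1$ (which the paper establishes from the $2\Delta$ wait after forming a blame certificate). You gesture at this in your final paragraph, so the gap is one of emphasis rather than substance.
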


\begin{proof}
    Suppose two blocks $B_k$ and $B_k'$ are committed at height $k$ at any two honest replicas. Suppose $B_k$ is committed due to $B_l$ being directly committed in view $v$, and $B_k'$ is committed due to $B_{l'}'$ being directly committed in view $v'$.
    Without loss of generality, suppose $v\leq v'$, and for $v=v'$, further assume that $l\leq l'$. 
    Since $B_l$ is directly committed and $B_{l'}$ is certified and ranks no lower than $\fc_v(B_l)$,
    by Lemma \ref{lem:smr:safety}, $B_{l'}$ must equal or extend $B_l$. Thus, $B_k'=B_k$.
\end{proof}

\begin{restatable}[Liveness]{theorem}{smrliveness}
\label{thm:smr:liveness}
All honest replicas keep committing new blocks.
\end{restatable}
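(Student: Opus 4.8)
The plan is to argue liveness by a two-part case analysis on the behavior of the current leader, showing that in either case the protocol makes progress within bounded time, and thus every client request is eventually committed. First I would establish the key timing invariant: after a view-change, all honest replicas enter view $v+1$ within $\Delta$ of one another (the $2\Delta$ wait in Step~\ref{smr:newview} absorbs the $\Delta$ spread in when honest replicas form the blame certificate plus the $\Delta$ message delay), and the new leader waits an additional $2\Delta$, so by the time the new leader proposes, it has received \status messages from all honest replicas and hence knows the highest certified block among honest replicas. This guarantees that an honest leader's first proposal extends a chain that every honest replica will accept (condition (1) of Step~\ref{smr:forward} is satisfiable), so no honest replica blames the leader for proposing an invalid block.

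Next I would handle the \textbf{honest-leader case}: suppose the leader $L$ of view $v$ is honest. I claim that within $6\Delta + (p-1)\alpha$ time, $p$ valid blocks get committed, so no honest replica triggers the blame rule in Step~\ref{smr:blame}. The accounting: after entering the view the leader waits $\le 4\Delta$ (the $2\Delta$ for others to enter plus its own extra $2\Delta$, measured against the slowest honest replica given $\le\Delta$ clock skew / entry skew), then proposes; the proposal reaches all honest replicas within $\Delta$; each honest replica forwards it and waits $\Delta$ in \texttt{vote-timer}; since $L$ is honest there is no equivocation and (if no premature blame certificate exists) no honest replica withholds its \vote; the $f+1$ honest votes reach everyone within another $\Delta$, so every honest replica commits. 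Summing gives a constant number of $\Delta$'s for the first commit and $\alpha$ per subsequent block; choosing the blame threshold $6\Delta + (p-1)\alpha$ conservatively covers this. Hence no honest replica ever collects $f+1$ \blame messages (Byzantine replicas alone cannot reach $f+1$), so the view never changes and new blocks keep being committed forever — liveness holds in this view.

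Then the \textbf{Byzantine-leader case}: suppose the leader does not let honest replicas commit $p$ blocks within the time bound, or equivocates. In the first situation, every honest replica that fails to commit in time broadcasts $\langle\blame,v\rangle$; since there are $f+1$ honest replicas, a blame certificate $\fb_v$ forms and is delivered to all honest replicas, who then move to view $v+1$. In the equivocation situation, the detecting replica broadcasts \blame with the equivocating pair, which propagates the evidence so that all honest replicas blame and again a certificate forms. Either way all honest replicas enter $v+1$ within $O(\Delta)$ time. Since leaders are round-robin, after at most $f$ bad leaders an honest leader is reached, and by the honest-leader case progress resumes permanently; combined with the fact that each honest leader proposes blocks containing all outstanding client requests, every request is eventually committed by all honest replicas.

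The main obstacle I expect is the timing bookkeeping around view-changes in the non-lock-step model with clock skew: one must verify carefully that the $2\Delta + 2\Delta$ waits are exactly enough for (a) every honest replica to have entered the new view before the new leader proposes, and (b) every honest replica's \status message (sent upon entry) to reach the new leader before it proposes, and simultaneously that a slow honest replica which entered $v+1$ late still receives the leader's first proposal, forwards it, runs its $\Delta$ timer, and votes before the $6\Delta+(p-1)\alpha$ blame deadline would fire — i.e., that an honest leader is never blamed. One must also confirm that a blame certificate legitimately formed in view $v$ (used to stop voting in Step~\ref{smr:vote}/\ref{smr:commit}) cannot, via the asymmetric-equivocation mechanism inherited from \ba, cause an honest leader's already-directly-committed block to be orphaned; this is exactly what Lemma~\ref{lem:smr:safety} guards, so liveness and safety interlock here and the argument should invoke that lemma to rule out the bad interleaving.
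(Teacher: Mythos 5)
Your overall architecture matches the paper's: an honest-leader case with a timing budget showing no honest replica ever blames, plus a Byzantine-leader case showing a view-change eventually installs an honest leader under round-robin. However, your Byzantine-leader case analysis has a genuine hole. You conclude that if the leader "does not let honest replicas commit $p$ blocks within the time bound," then "every honest replica that fails to commit in time broadcasts \blame; since there are $f+1$ honest replicas, a blame certificate forms." But a Byzantine leader can keep a strict subset of the honest replicas committing on time while starving the rest; then only the starved minority blames, no certificate $\fb_v$ forms, no view-change occurs, and you have not shown the starved replicas ever commit anything. The paper closes exactly this case by noting that a replica that directly commits at Step~\ref{smr:commit} broadcasts the certificate $\fc_v(B_k)$, so if even one honest replica keeps committing in time, its forwarded certificates drag all honest replicas along (unless a blame certificate does form, in which case the view changes). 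Your plan needs this argument or an equivalent one.

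Two further points would trip you up in the honest-leader case. (a) You assert that condition (1) of Step~\ref{smr:forward} is "satisfiable" once the leader knows the highest certified block, but that condition also requires each honest replica to already \emph{possess every block in the chain} that the first proposal extends; this is not automatic and the paper proves it separately (Lemma~\ref{lem:smr:chain}: if a block extending a chain is certified in view $v$, all honest replicas receive the whole chain before entering $v+1$). Without it an honest replica never starts its \texttt{vote-timer} and the budget breaks. (b) Your timing budget double-counts: the leader's first $2\Delta$ wait occurs \emph{before} entering $v+1$ and overlaps with the measuring replica's own $2\Delta$ wait, and the $6\Delta+(p-1)\alpha$ clock only starts when that replica enters $v+1$. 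The correct accounting is $\Delta$ (leader enters at most $\Delta$ late) $+\,2\Delta$ (status wait) $+\,\Delta$ (proposal) $+\,\Delta$ (\texttt{vote-timer}) $+\,\Delta$ (votes) $=6\Delta$ exactly; your $7\Delta$ exceeds the protocol's threshold, which is not "conservative" — taken at face value it would imply honest leaders get blamed and liveness fails.
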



The proof of Theorem \ref{thm:smr:liveness} is presented in Appendix \ref{app:proof:smr:liveness} due to lack of space.

\begin{theorem}[Good-case Latency]\label{thm:smr:efficiency}
    In the good-case, every proposed block will be committed in $\Delta+2\delta$ time after being proposed.
\end{theorem}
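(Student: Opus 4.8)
The plan is to trace the good-case timeline of a single proposed block: one network hop ($\delta$) for the proposal to reach everyone, a $\Delta$ wait on \texttt{vote-timer}, and one more hop ($\delta$) for the votes to come back and form a certificate — a total of $\Delta+2\delta$ — while checking that none of the abort conditions (equivocating blocks/proposals from $L$, or a blame certificate $\fb_v$) is ever triggered in a view with an honest leader. So fix a view $v$ whose leader $L$ is honest, and assume all honest replicas are in view $v$ and in the steady state. First I would observe that, since $L$ is honest, $L$ never signs equivocating blocks or proposals, so no honest replica ever sees an equivocation in view $v$. Next I would show no blame certificate $\fb_v$ ever forms: $\fb_v$ requires $f+1 = n-f$ \blame messages, hence at least one from an honest replica, and an honest replica blames only upon seeing equivocation (ruled out) or after $6\Delta+(p-1)\alpha$ time elapses in view $v$ without $p$ committed blocks. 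I would discharge this last condition by an induction on block height (equivalently, on elapsed time in the view), showing that $L$'s $j$-th proposal is committed by every honest replica well inside the blame window — the slack $6\Delta+(p-1)\alpha$ is exactly what covers $p$ proposals spaced $\alpha$ apart, each committed within $\Delta+2\delta\le 3\Delta$ of being proposed; alternatively one may simply cite the liveness analysis of Theorem~\ref{thm:smr:liveness}, which already certifies $6\Delta+(p-1)\alpha$ as sufficient.

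Then I would carry out the main timing argument. Suppose $L$ proposes $B_k$ at time $t$. By synchrony, every honest replica receives $\langle \texttt{propose}, B_k, \fs, v\rangle_L$ by time $t+\delta$. A short sub-induction on $k$ shows that by then each honest replica already knows every ancestor of $B_k$ (each ancestor was proposed earlier and hence received earlier, with the re-broadcast in Step~\ref{smr:forward} as an additional backstop), and since in the good case all of $L$'s blocks lie on a single chain, $B_k$ extends the highest certified block known to the replica and extends a chain it knows; so the Step~\ref{smr:forward} condition is met and the replica sets $\texttt{vote-timer}_k=\Delta$ by time $t+\delta$. The timer reaches $0$ by $t+\delta+\Delta$, and since no equivocation and no $\fb_v$ are ever observed, every honest replica broadcasts $\langle\vote,B_k,v\rangle$ by $t+\delta+\Delta$. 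These votes arrive at all honest replicas by $t+\Delta+2\delta$; as there are $f+1$ honest replicas this produces a certificate $\fc_v(B_k)$ at every honest replica by $t+\Delta+2\delta$, and the commit precondition of Step~\ref{smr:commit} (no equivocation, no blame certificate) again holds, so every honest replica commits $B_k$ together with all its ancestors by time $t+\Delta+2\delta$.

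The main obstacle I anticipate is the apparent circularity between ``no blame certificate forms'' and ``blocks commit quickly'': Step~\ref{smr:commit} gates committing on not having seen $\fb_v$, while the timing-based formation of $\fb_v$ in Step~\ref{smr:blame} is gated on blocks not committing quickly. I expect to break it cleanly with the height/time induction sketched above, using $\Delta+2\delta\le 3\Delta$ and the fact that the $6\Delta+(p-1)\alpha$ timeout was chosen precisely to leave room for $p$ honest-leader proposals; a first-proposal-in-a-view base case is handled by the $2\Delta$ waits in Step~\ref{smr:newview}, which ensure every honest replica has entered view $v$ and holds the chain referenced by $\fs$ before $L$'s first proposal is acted upon. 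The remaining point — the sub-induction establishing ancestor availability and the ``extends the highest certified block'' precondition — is routine in the good case since all honest replicas see the same single chain from $L$.
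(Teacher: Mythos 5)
Your proposal is correct and follows essentially the same route as the paper's proof: one $\delta$ hop for the proposal, the $\Delta$ \texttt{vote-timer} wait, and one $\delta$ hop for the votes. The paper's own proof is just this three-line timeline; the extra care you take in ruling out equivocation and the blame certificate (and in checking ancestor availability) is material the paper instead leaves implicit or defers to the liveness argument of Theorem~\ref{thm:smr:liveness}, so it is a more rigorous rendering of the same argument rather than a different one.
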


\begin{proof}
    In the good-case, the leader is honest, and its proposal will take time $\delta$ to reach all replicas. Then, all honest replica will wait for time $\Delta$ before sending the \vote message. Finally, the above \vote messages reach all honest replicas after $\delta$ time, leading to commit at all honest replicas.
    Therefore, the any proposed block will be committed in $\Delta+2\delta$ time if the leader is honest.
\end{proof}



\section{The Mobile Link Failure Model}
\label{sec:mlf:summary}

Our previous protocols \ba, \bb and \smr heavily relies on the synchrony assumption that messages between honest replicas are delivered within $\Delta$ time.
Once this assumption is violated even at only one honest replica, our protocols as well as most of the synchronous BFT protocols would fail under certain executions.
To make our protocols more practical, we further strengthen them by considering the {\em  mobile link failures} suggested in the literature \cite{schmid2009impossibility, biely2011synchronous}. 

The mobile link failure captures the case where a certain fraction of the communication links at any honest replica are controlled by the adversary so that the messages via these link may be lost or delayed. 
We first generalize the mobile link failure model for non-lock-step synchrony, where synchronous protocols do not execute in a lock-step fashion. Then, we propose a generic transformation to make a synchronous protocol tolerate mobile link failures. Applying the above transformation to our previous protocols, we can obtain protocols for BA, BB and BFT SMR under the mobile link failure model with \latency of $2\Delta+4\delta$.
Lastly, we prove that $2\Delta$ is a lower bound for the \latency of BA/BB under the mobile link failure model (Section \ref{sec:ml:lb}).

\paragraph{The Mobile Link Failure Model without Lock-step Execution.}
\label{sec:ml:model}
We model the communication channel between any two replicas as two directed links,
and use $(i,j)$ to denote the directed link from replica $i$ to replica $j$. 
Link $(i,j)$ is called a send link of replica $i$, and a receive link for replica $j$. 
Each replica has $n$ send links and $n$ receive links, including a link that connects itself.

Prior works \cite{schmid2002formally, bielyoptimal, weiss2001consensus, schmid2009impossibility, biely2011synchronous} consider mobile link failures under lock-step synchrony: for each lock step (or round), a link $(i,j)$ is faulty if replica $j$ does not receive the message sent by replica $i$ in this round.
The link failure is mobile in the sense that the set of faulty links can change at each round.
Since our protocol does not require lock-step execution, we generalize the mobile link failure model for non-lock-step synchrony as follows. 

Under our non-lock-step model, each link is either {\em faulty} or {\em non-faulty} at any time point, and two directions of a link between any two replicas may fail independently, i.e., link $(i,j)$ may be non-faulty while link $(j, i)$ is faulty. 
If a link $(i,j)$ is non-faulty at time $t$, then the message sent by replica $i$ at time $t$ will be received by replica $j$ at time $\leq t+\delta$.
Otherwise, the message may be lost or delayed.
Link failures are {\em mobile} in the sense that an adversary can control the set of faulty links, subject to the following constraints.
\begin{enumerate}[noitemsep,topsep=0pt]
    \item\label{mobility:1} The adversary can corrupt up to $f_l^s$ send links and $f_l^r$ receive links at any replica, as long as $f_l^s+f_l^r < n-f$.
    \item\label{mobility:2} If a link turns from faulty to non-faulty at time $t$, it continues to count towards the threshold of faulty links until $t+\delta$.
\end{enumerate}

We explain the necessity of the above two constraints.
Constraint (\ref{mobility:1}) is necessary for solving BB or BA under even static link failures \cite{schmid2009impossibility}, and hence also necessary under our definition of mobile link failures.
Constraint (\ref{mobility:2}) states that a recovered link remains on the adversary's corruption budget for an additional $\delta$ time.
Some constraint in this vein is necessary;
otherwise, the adversary can drop all messages with a budget of a single link failure.
Since we assume no lock-step execution, it is normal that no two honest nodes ever send messages at precisely the same time. 
Then, whenever there is a message sent via a link at time $t$, the adversary can corrupt the link at time $t$ and immediately uncorrupt it.
Also note that constraint (\ref{mobility:2}) is consistent with the mobile link failure model under lock-step synchrony. In the lock-step model, the set of faulty links remain unchanged during a round, which has duration at least $\delta$.
Therefore, the mobile link failure model for lock-step synchrony \cite{schmid2009impossibility, biely2011synchronous} is a special case of our generalized definition above.


\paragraph{A generic transformation to tolerate mobile link failures under non-lock-step synchrony.}
\label{sec:ml:transformation}
In this section, we show that given a Byzantine fault tolerant protocol under the synchronous model, there is a generic approach to transform it into a protocol that handles mobile link failures as defined in the previous section.
In fact, it has been observed that under lock-step synchrony,
a two-round simulation under the mobile link failure model can implement a one-round multicast over non-faulty links under the synchronous model (Corollary 1 of \cite{schmid2009impossibility}).

Here, we present a similar transformation for non-lock-step synchronous protocols: (i) when any replica receives a message described in the protocol, it forwards the message only once to all other replicas except the sender of the message,
(ii) every timing parameter in the protocol is doubled, i.e., if the original protocol waits for $T$ time at some step, then the transformed protocol waits for $2T$ at that step.


The following lemma shows that after the generic transformation above, any message sent by an honest replica will be received by all honest replicas after $2\delta$, implying a $2\delta$-delay simulation of the synchronous model.
Therefore, a protocol for the mobile link failure model can be obtained from a protocol designed for the synchronous model.

\begin{lemma}\label{lem:ml:echo}
    If an honest replica sends a message at time $t$, then all honest replicas receive the message by time $t+2\delta$.
\end{lemma}

\begin{proof}
    Suppose an honest replica $h_s$ sends a message at time $t$ and another honest replica $h_t$ does not receive the message by time $t+2\delta$.
    Let $A$ be the set of honest replicas that are connected to non-faulty send links of $h_s$ at time $t$.
    Since the number of faulty send links at $h_s$ is at most $f_l^s$, $|A|\geq n-f-f_l^s$.
    According to the model of mobile link failures in Section \ref{sec:ml:model}, replicas in $A$ receive the message by time $\leq t+\delta$. 
    Each replica in $A$ tries to relay the message to $h_t$ some time between $t$ and $t+\delta$.
    In order for $h_t$ not to receive the message by $t+2\delta$, all these $|A|$ attempts have to fail.
    But in this case, by constraint (\ref{mobility:2}), all these $|A|\geq n-f-f_l^s$ receive links of $h_t$ will be counted as faulty at time $t+\delta$. 
    This violates constraint (\ref{mobility:1}), which states that less than $n-f-f_l^s$ receive links can be counted as faulty at any replica at any time.
\end{proof}

\paragraph{Lower bound on good-case latency under mobile link failures.}
\label{sec:ml:lb}
Next, we show a lower bound of $2\Delta$ on the \latency for any Byzantine agreement protocol that tolerates mobile link failures. 
In fact, we can prove the lower bound with static link failures.
The lower bound holds for Byzantine broadcast as well, by an analogous proof.

\begin{theorem}
\label{thm:ml:lb}
    Any Byzantine agreement protocol in the synchronous authenticated setting that (i) is resilient to $\geq \frac{n+1}{3}$ Byzantine faults and (ii) tolerates at most $f_l^s$ static send link failures and $f_l^r$ static receive link failures at each replica where $f_l^s+f_l^r<n-f$, must have a \latency of at least $2\Delta$.
\end{theorem}

\begin{proof}
    Suppose for the sake of contradiction that such a protocol exists.
    Let there be $n=3f-1$ replicas with $f$ Byzantine replicas.
    Divide the replicas into three groups $A,B,C$, where $|A|=|C|=f-1$ and $|B|=f+1$. 
    All replicas are connected to each other, except that replicas in $A$ are disconnected to replicas in $C$ due to the link failure.
    The construction satisfies the link failure requirement since $|A|+|C|=2f-2< n-f=2f-1$.
    Consider the following three scenarios.
  
    In scenario 1, it satisfies the good-case, and all honest replicas have input $0$. All $f-1$ replicas in $C$ are Byzantine. 
        There is one Byzantine replica $b$ in $B$, and the other $f$ replicas in $B$ are honest.
        The Byzantine replica in $C$ remain silent.
        The Byzantine replica $b$ follows the protocol except the following.
        (1) The replica $b$ pretends that no message is received from the leader.
        (2) The replica $b$ pretends that no message is received from the remaining honest replicas in $B$, and $b$ sends no message to the honest replicas in $B$.
        (3) For any messages received from the replicas in $A$, replica $b$ pretends that the messages are received after a $\Delta$ network delay (suppose Byzantine replicas know the actual network delay $\delta$, which is $0$ in the good-case). Replica $b$ also intentionally delays any messages it sends, to pretend that the network delay is $\Delta$.
        According to the validity condition, the honest replicas in $A$ and $B$ will commit $0$ within $<2\Delta$ time.
        
        Scenario 2 is the mirror case of scenario 1 and satisfies the good-case, specified as follows.
        All honest replicas have input $1$. All $f-1$ replicas in $A$ are Byzantine and one replica $b$ in $B$ is Byzantine.
        The Byzantine replicas in $A$ remain silent.
        The Byzantine replica $b$ follows the protocol except the following.
        (1) The replica $b$ pretends that no message is received from the leader.
        (2) The replica $b$ pretends that no message is received from the remaining honest replicas in $B$, and $b$ sends no message to the honest replicas in $B$.
        (3) For any messages received from the replicas in $C$, replica $b$ pretends that the messages are received after a $\Delta$ network delay (suppose Byzantine replicas know the actual network delay $\delta$, which is $0$ in the good-case). Replica $b$ also intentionally delays any messages it sends, to pretend that the network delay is $\Delta$.
        According to the validity condition, the honest replicas in $C$ and $B$ will commit $1$ within $<2\Delta$ time.
        
        In scenario 3, $f$ replicas in $B$ are Byzantine, and the remaining one replica $h$ in $B$ is honest. 
        Replicas in $A$ have input value $0$, replicas in $C$ have input value $1$, and the replica $h$ has no input value.
        The Byzantine replicas in $B$ behave to replicas in $A$ exactly the same as the honest replicas in $B$ in scenario 1, behave to replicas in $C$ exactly the same as the honest replicas in $B$ in scenario $2$, and send no message to the honest replica $h$ in $B$.
        Suppose that the network delay between honest replicas is $\delta=\Delta$ in scenario 3, then we can claim that replicas in $A$ cannot distinguish scenario 1 and 3 within time $<2\Delta$.
        \begin{itemize}
            \item First, to replicas in $A$, the Byzantine replica $b$ in $B$ behaves in scenario 1 exactly the same as the honest replica $h$ in $B$ behaves in scenario 3.
            In scenario 3, the network delay between honest replicas in $B$ and $A$ is $\Delta$, only messages sent by $B$ at time $< \Delta$ will be received by replicas in $A$ before they commit. Since any message from $C$ reaches the honest replica $h$ in $B$ at time $\geq \Delta$ in scenario 3, those messages from $C$ do not influence the messages that $h$ sends to $A$ that are received within time $<2\Delta$.
            In scenario 1, the replicas in $C$ are silent, which also do not influence the messages that $b$ sends to $A$ before the replicas in $A$ commit.
            Also, the Byzantine replica $b$ in $B$ in scenario 1 pretends that no message is received from the sender or from the honest replicas in $B$, and it pretends that the communication with replicas in $A$ has the maximum network delay $\Delta$. 
            Therefore, to replicas in $A$, the Byzantine replica $b$ in $B$ in scenario 1 behaves exactly the same as the honest replica $h$ in $B$ in scenario 3.
            
            \item Also by construction, to replicas in $A$, the $f$ Byzantine replicas in $B$ in scenario 3 behave exactly the same as the $f$ honest replicas in $B$ in scenario 1. 
        \end{itemize}
        Therefore, replicas in $A$ cannot distinguish scenario 1 and 3, and they will commit $0$ in scenario 3 within time $<2\Delta$.
        Similarly, replicas in $C$ cannot distinguish scenario 2 and 3. Therefore the replicas in $C$ will commit $1$ in scenario 3 within time $<2\Delta$.
        However, this violates agreement since honest replicas commit different values. 
\end{proof}

\paragraph{The mobile sluggish fault model.}
\label{sec:msf:summary}
In the mobile sluggish fault model \cite{chan2018pili, guo2019synchronous, synchotstuff}, an honest replica is either {\em sluggish} or {\em prompt} at a given time.
The messages sent or received by sluggish replicas may be delayed in the network, while the prompt replicas follow the message delay bound. 
The set of sluggish replicas is mobile in the sense that it can change at any instance of time.
We present the results for the mobile sluggish fault model in Appendix \ref{app:msf}.

Formally, if an honest replica $r_1$ is prompt at time $t_1$, then any message sent by $r_1$ at time $\leq t_1$ will be received by any honest replica that is prompt at time $t_2\geq t_1+\Delta$. 
Intuitively, in the mobile sluggish model, any message sent by a sluggish replica would satisfy the maximum message delay bound when the replica becomes prompt. 
We assume the number of honest and prompt replicas is always $>n/2$ at any time, which is necessary to solve Byzantine agreement or broadcast under the mobile sluggish fault model \cite{guo2019synchronous, synchotstuff}.

Under the mobile sluggish fault model introduced above, we can increase the robustness of our protocols \ba, \bb and \smr to tolerate mobile sluggish faults using a technique that adds extra communication rounds in the protocol, while keeping the \latency near-optimal.

\paragraph{Comparison of the two models.}
Despite similarities of the mobile link failures and mobile sluggish faults above, there are differences between these two models, including the following.
\begin{enumerate}[noitemsep,topsep=0pt]
    \item With mobile sluggish faults, an honest but sluggish replica cannot communicate with any other replicas before it becomes prompt, whereas in mobile link failure model any honest replicas always have a certain number of non-faulty links to send/receive messages.
    \item The mobile sluggish model assumes the set of replicas that are Byzantine or sluggish must be minority at any time, whereas in the mobile link failure model, the number of faulty send and receive links must be less than $n-f$ at any honest replica.
    \item The set of sluggish replicas can change at any instance of time in mobile sluggish model, while the set of faulty link in mobile link failure model can also change dynamically subject to some constraints (see Section \ref{sec:mlf:summary}).
\end{enumerate}

\section{Related Work}

\stitle{Byzantine fault tolerant protocols.}
Byzantine fault tolerant protocols, first proposed by Lamport \cite{lamport1982byzantine}, have received significant amount of attention for several decades.
For Byzantine broadcast, Dolev-Strong protocol \cite{dolev1983authenticated} is a deterministic $f+1$-round protocol with $O(n^2f)$ communication complexity.
The Dolev-Strong protocol can also be modified to solve authenticated BA for the $t<n/2$ case with the same round and communication complexity.
A sequence of effort has been made on randomized protocol for reducing the round complexity and message complexity \cite{ben1983another, rabin1983randomized, katz2006expected}, and most efficient solutions for both Byzantine agreement and broadcast are proposed by Abraham et al. \cite{abraham2019synchronous} with expected constant round and expected quadratic communication complexity.
The state-of-the-art BFT SMR protocol Sync HotStuff \cite{synchotstuff} improves the \latency to $2\Delta$ under the synchronous model.
As a comparison, our protocol improves the \latency to $\Delta+2\delta$, which is near-optimal for both cases.
An earlier work~\cite{attiya1991bounds} proposes a synchronous crash-tolerant agreement protocol with latency $(2f-1)\delta+\Delta$, and hence \latency $\Delta$.
Under partial synchrony, a latency-optimal transformation from Byzantine consensus to BFT SMR is proposed in \cite{sousa2012byzantine} for building practical BFT SMR systems \cite{bessani2014state}.
Another line of research aims at developing BFT protocols with small latencies in the optimistic cases when certain conditions are satisfied, under synchrony \cite{pass2018thunderella, synchotstuff} or asynchrony \cite{dutta2005best, martin2006fast, song2008bosco}.
In particular, the notion of optimistic responsiveness \cite{pass2018thunderella, synchotstuff} is proposed for synchronous BFT SMR protocols, where the protocol can make decisions in network speed $O(\delta)$ when $>3n/4$ votes are collected.
It should be noted that our paper investigates the upper and lower bounds for \latency defined under any adversary strategies, instead of the optimistic case.

\stitle{Weaker network models.}
{\em The mobile link failure model.}
For models that has no restriction on the number of failed links, Santoro and Widmayer \cite{santoro1989time} show that the consensus is unsolvable with even a single process suffering from such link failure.
Thus, the authors in \cite{schmid2002formally} introduce a mobile link failure model with constraints on the number of send and receive link failures.
There has been a sequence of efforts on adapting existing consensus algorithms for the mobile link failure model \cite{bielyoptimal, weiss2001consensus}, and results on the lower bounds on the required number of processes and rounds \cite{schmid2009impossibility}.
Another line of work identifies the tight necessary and sufficient condition on the underlying communication graph for solving iterative approximate consensus under mobile link failures \cite{tseng2014iterative}.
The model and results above assume protocols of lock-step execution, and our results do not pose such an assumption.
{\em The mobile sluggish fault model.}
Recently, Guo et al. consider a new model that allows the bound $\Delta$ on the message delay to be violated for a set of honest replicas under synchrony, to better capture the reality when some honest replicas are partitioned or offline due to network misbehavior.
This type of faults is later called sluggish, and considered in both PiLi \cite{chan2018pili} and Sync HotStuff \cite{synchotstuff} to introduce more fault-tolerance to the BFT SMR protocols.
In this paper, we introduce techniques to handle mobile link failures and mobile sluggish faults for our BA, BB and BFT SMR protocols.


\section{Conclusion and Future Work}

We propose using the non-lock-step models and the \latency metric for Byzantine agreement and broadcast as they better capture what matters in practical BFT SMR.
We propose the first Byzantine agreement, broadcast, and state machine replication protocols with near-optimal \latency of $\Delta+2\delta$.
We further extend the protocols to weaker models with mobile link failures and mobile sluggish faults, while achieving near-optimal \latency.

The most interesting future work is to close the gap between the upper bound of $\Delta+2\delta$ and the lower bound of $\Delta$ under $n/3\leq f<n/2$ faults. Another intriguing open question is to study the tight bounds of the \latency when the number of faults is not within $[n/3,n/2)$.

The transformation for mobile link failures brings a blowup to communication complexity because each message between a two replicas is now relayed by a group of other replicas.
While it may be difficult to come up with a generic transformation that preserves the communication complexity, it may be possible to directly design protocols in the mobile link failure model to avoid this communication blowup. 

Currently, the mobile link failure model describes slow and lossy links while the mobile sluggish model describe slow nodes (replicas), and there is no clear way to unify them.
An interesting future direction is to come up with an even weaker model that captures both slow/lossy links and slow/lossy nodes, and to design protocols in the unified model.


\newpage

\bibliography{ref}

\begin{thebibliography}{10}

\bibitem{abraham2019synchronous}
Ittai Abraham, Srinivas Devadas, Danny Dolev, Kartik Nayak, and Ling Ren.
\newblock Synchronous byzantine agreement with expected {$O(1)$} rounds,
  expected {$O(n^2)$} communication, and optimal resilience.
\newblock In {\em International Conference on Financial Cryptography and Data
  Security}, pages 320--334. Springer, 2019.

\bibitem{synchotstuff}
Ittai Abraham, Dahlia Malkhi, Kartik Nayak, Ling Ren, and Maofan Yin.
\newblock Sync hotstuff: Simple and practical synchronous state machine
  replication.
\newblock {\em IEEE Symposium on Security and Privacy (SP)}, 2020.

\bibitem{attiya1991bounds}
Hagit Attiya, Cynthia Dwork, Nancy Lynch, and Larry Stockmeyer.
\newblock Bounds on the time to reach agreement in the presence of timing
  uncertainty.
\newblock In {\em Proceedings of the twenty-third annual ACM symposium on
  Theory of computing}, pages 359--369, 1991.

\bibitem{ben1983another}
Michael Ben-Or.
\newblock Another advantage of free choice: completely asynchronous agreement
  protocols.
\newblock In {\em Proceedings of the second annual ACM symposium on Principles
  of distributed computing}, pages 27--30, 1983.

\bibitem{bessani2014state}
Alysson Bessani, Jo{\~a}o Sousa, and Eduardo~EP Alchieri.
\newblock State machine replication for the masses with bft-smart.
\newblock In {\em 44th Annual IEEE/IFIP International Conference on Dependable
  Systems and Networks}, pages 355--362. IEEE, 2014.

\bibitem{bielyoptimal}
Martin Biely.
\newblock An optimal byzantine agreement algorithm with arbitrary node and link
  failures.
\newblock In {\em Proc. 15th Annual IASTED International Conference on
  Parallel}, pages 146--151.

\bibitem{biely2011synchronous}
Martin Biely, Ulrich Schmid, and Bettina Weiss.
\newblock Synchronous consensus under hybrid process and link failures.
\newblock {\em Theoretical Computer Science}, 412(40):5602--5630, 2011.

\bibitem{castro1999practical}
Miguel Castro and Barbara Liskov.
\newblock Practical byzantine fault tolerance.
\newblock In {\em Proceedings of the third symposium on Operating systems
  design and implementation}, pages 173--186. USENIX Association, 1999.

\bibitem{chan2018pili}
T-H~Hubert Chan, Rafael Pass, and Elaine Shi.
\newblock Pili: An extremely simple synchronous blockchain.
\newblock 2018.

\bibitem{dolev1995dynamic}
Danny Dolev, Joseph~Y Halpern, Barbara Simons, and Ray Strong.
\newblock Dynamic fault-tolerant clock synchronization.
\newblock {\em Journal of the ACM (JACM)}, 42(1):143--185, 1995.

\bibitem{dolev1990early}
Danny Dolev, Ruediger Reischuk, and H~Raymond Strong.
\newblock Early stopping in byzantine agreement.
\newblock {\em Journal of the ACM (JACM)}, 37(4):720--741, 1990.

\bibitem{dolev1983authenticated}
Danny Dolev and H.~Raymond Strong.
\newblock Authenticated algorithms for byzantine agreement.
\newblock {\em SIAM Journal on Computing}, 12(4):656--666, 1983.

\bibitem{dutta2005best}
Partha Dutta, Rachid Guerraoui, and Marko Vukolic.
\newblock Best-case complexity of asynchronous byzantine consensus.
\newblock Technical report, 2005.

\bibitem{dwork1988consensus}
Cynthia Dwork, Nancy Lynch, and Larry Stockmeyer.
\newblock Consensus in the presence of partial synchrony.
\newblock {\em Journal of the ACM (JACM)}, 35(2):288--323, 1988.

\bibitem{feldman1988optimal}
Paul Feldman and Silvio Micali.
\newblock Optimal algorithms for byzantine agreement.
\newblock In {\em Proceedings of the twentieth annual ACM symposium on Theory
  of computing}, pages 148--161, 1988.

\bibitem{fischer1982lower}
Michael~J Fischer and Nancy~A Lynch.
\newblock A lower bound for the time to assure interactive consistency.
\newblock {\em Information Processing Letters}, 14(4):183--186, 1982.

\bibitem{fischer1990easy}
Michael~J Fischer, Nancy~A Lynch, and Michael Merritt.
\newblock Easy impossibility proofs for distributed consensus problems.
\newblock In {\em Fault-tolerant distributed computing}, pages 147--170.
  Springer, 1990.

\bibitem{guo2019synchronous}
Yue Guo, Rafael Pass, and Elaine Shi.
\newblock Synchronous, with a chance of partition tolerance.
\newblock {\em IACR Cryptology ePrint Archive}, 2019:179, 2019.

\bibitem{hanke2018dfinity}
Timo Hanke, Mahnush Movahedi, and Dominic Williams.
\newblock Dfinity technology overview series, consensus system.
\newblock {\em arXiv preprint arXiv:1805.04548}, 2018.

\bibitem{ierzberg1989efficient}
A~Ierzberg and S~Kutten.
\newblock Efficient detection of message forwarding faults.
\newblock In {\em Proceeding of the 8th ACM Symposium on Principles of
  Distributed Computing}, pages 339--353, 1989.

\bibitem{katz2006expected}
Jonathan Katz and Chiu-Yuen Koo.
\newblock On expected constant-round protocols for byzantine agreement.
\newblock In {\em Annual International Cryptology Conference}, pages 445--462.
  Springer, 2006.

\bibitem{LSP82}
Leslie Lamport, Robert Shostak, and Marshall Pease.
\newblock The byzantine generals problem.
\newblock {\em ACM Transactions on Programming Languages and Systems},
  4(3):382--401, 1982.

\bibitem{lamport1982byzantine}
LESLIE LAMPORT, ROBERT SHOSTAK, and MARSHALL PEASE.
\newblock The byzantine generals problem.
\newblock {\em ACM Transactions on Programming Languages and Systems},
  4(3):382--401, 1982.

\bibitem{martin2006fast}
J-P Martin and Lorenzo Alvisi.
\newblock Fast byzantine consensus.
\newblock {\em IEEE Transactions on Dependable and Secure Computing},
  3(3):202--215, 2006.

\bibitem{pass2017hybrid}
Rafael Pass and Elaine Shi.
\newblock Hybrid consensus: Efficient consensus in the permissionless model.
\newblock In {\em 31st International Symposium on Distributed Computing (DISC
  2017)}. Schloss Dagstuhl-Leibniz-Zentrum fuer Informatik, 2017.

\bibitem{pass2018thunderella}
Rafael Pass and Elaine Shi.
\newblock Thunderella: Blockchains with optimistic instant confirmation.
\newblock In {\em Annual International Conference on the Theory and
  Applications of Cryptographic Techniques}, pages 3--33. Springer, 2018.

\bibitem{rabin1983randomized}
Michael~O Rabin.
\newblock Randomized byzantine generals.
\newblock In {\em 24th Annual Symposium on Foundations of Computer Science
  (sfcs 1983)}, pages 403--409. IEEE, 1983.

\bibitem{santoro1989time}
Nicola Santoro and Peter Widmayer.
\newblock Time is not a healer.
\newblock In {\em Annual Symposium on Theoretical Aspects of Computer Science},
  pages 304--313. Springer, 1989.

\bibitem{schmid2009impossibility}
Ulrich Schmid, Bettina Weiss, and Idit Keidar.
\newblock Impossibility results and lower bounds for consensus under link
  failures.
\newblock {\em SIAM Journal on Computing}, 38(5):1912--1951, 2009.

\bibitem{schmid2002formally}
Ulrich Schmid, Bettina Weiss, and John Rushby.
\newblock Formally verified byzantine agreement in presence of link faults.
\newblock In {\em Proceedings 22nd International Conference on Distributed
  Computing Systems}, pages 608--616. IEEE, 2002.

\bibitem{schneider1990implementing}
Fred~B Schneider.
\newblock Implementing fault-tolerant services using the state machine
  approach: A tutorial.
\newblock {\em ACM Computing Surveys (CSUR)}, 22(4):299--319, 1990.

\bibitem{song2008bosco}
Yee~Jiun Song and Robbert van Renesse.
\newblock Bosco: One-step byzantine asynchronous consensus.
\newblock In {\em International Symposium on Distributed Computing}, pages
  438--450. Springer, 2008.

\bibitem{sousa2012byzantine}
Joao Sousa and Alysson Bessani.
\newblock From byzantine consensus to bft state machine replication: A
  latency-optimal transformation.
\newblock In {\em 2012 Ninth European Dependable Computing Conference}, pages
  37--48. IEEE, 2012.

\bibitem{tseng2014iterative}
Lewis Tseng and Nitin Vaidya.
\newblock Iterative approximate consensus in the presence of byzantine link
  failures.
\newblock In {\em International Conference on Networked Systems}, pages 84--98.
  Springer, 2014.

\bibitem{weiss2001consensus}
Bettina Weiss and Ulrich Schmid.
\newblock Consensus with written messages under link faults.
\newblock In {\em Proceedings 20th IEEE Symposium on Reliable Distributed
  Systems}, pages 194--197. IEEE, 2001.

\end{thebibliography}

\appendix

\section{Lower Bound on the Good-case Latency under Synchrony}
\label{app:lb}
We formally restate the lower bound result from \cite{synchotstuff} to a lower bound on the \latency for both Byzantine agreement and broadcast for completeness.

\lowerbound*

\begin{proof}
    We first prove for Byzantine agreement.
    For the sake of contradiction, suppose that there exists such a Byzantine agreement protocol that is resilient to $f\geq n/3$ and have a \latency $<\Delta$.
    Divide the replicas into three groups $P,Q,R$ each of size up to $n/3$. 
    Consider three scenarios as follows.
    In scenario $A$, suppose it satisfies the good-case definition: all honest replicas have input $0$.  Only the replicas in $Q$ are Byzantine and remain silent. 
    By the validity of BA, the protocol ensures that the replicas in $P,R$ commit $0$ in  $<\Delta$ time.
    In scenario $B$, also suppose it satisfies the good-case definition: all honest replicas have input $1$. 
    Only the replicas in $P$ are Byzantine and remain silent.
    By the validity of BA, the protocol ensures that the replicas in $Q,R$ commit $1$ in  $<\Delta$ time.
    In scenario $C$, only the replicas in $R$ are Byzantine. The replicas in $P$ are honest and have input $0$, and the replicas in $Q$ are honest and have input $1$. The Byzantine replicas in $R$ behaves to replicas in $P$ identically as in scenario $A$, and behaves to replicas in $Q$ identically as in scenario $B$.
    Suppose the messages between any replica in $P$ and any replica in $Q$ have latency $\delta=\Delta$.
    Then scenario $A$ and $C$ are indistinguishable to the replicas in $P$ within time $<\Delta$, and they will commit $0$ in $<\Delta$ time as in scenario $A$. 
    Similarly, the replicas in $Q$ will commit $1$ in $<\Delta$ time as in scenario $B$.
    This violates the agreement property of the Byzantine agreement, and therefore such a BA protocol cannot exist.
    
    The proof for Byzantine broadcast is similar. In scenario $A$, the sender is honest and sends $0$, while in scenario $B$ the sender is also honest and sends $1$.
    In scenario $C$, the sender is Byzantine, sends $0$ to replicas in $P$ and $1$ to replicas in $Q$. All messages from the sender are delivered instantaneously, then rest of the proof is identical to that of Byzantine agreement.
\end{proof}

\section{Missing Proofs}
\label{app:missingproofs}

\begin{lemma}\label{lem:smr:chain}
    If a block $B_l$ extending a chain $\fc$ is certified in view $v$, then all honest replicas receive all blocks in $\fc$ before entering view $v+1$.
\end{lemma}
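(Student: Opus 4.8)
The plan is to trace how a block $B_l$ gets certified and argue that this certification forces every honest replica to have seen the whole chain $\fc$ that $B_l$ extends, with enough time to spare before view $v+1$ begins. First I would observe that since $B_l$ is certified in view $v$, there exists a quorum certificate $\fc_v(B_l)$ consisting of $f+1$ distinct signed \vote messages, and at least one of these votes comes from an honest replica, call it $h$. By Step~\ref{smr:vote}, $h$ only votes for $B_l$ after its $\texttt{vote-timer}_l$ expires, and by Step~\ref{smr:forward} that timer is set only after $h$ verifies that $B_l$ extends a chain known to it and (re-)broadcasts every block in that chain that $h$ had not already forwarded. Hence $h$ itself broadcasts all blocks of $\fc$ (the ones it hadn't already forwarded earlier) at some time $t_h$, and the remaining blocks of $\fc$ were broadcast by $h$ even earlier. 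So all blocks of $\fc$ are sent by the honest replica $h$ no later than the moment $h$ starts its vote-timer for $B_l$.

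Next I would pin down the timing. The honest replica $h$ starts $\texttt{vote-timer}_l$ at some time $t_h$, waits $\Delta$, and only then sends its \vote for $B_l$ at time $t_h + \Delta$. For this vote to be part of a certificate $\fc_v(B_l)$ that exists in view $v$, it must be sent before $h$ leaves view $v$ — concretely, before $h$ forms a blame certificate $\fb_v$ and stops voting in view $v$ (Step~\ref{smr:newview}), and in any case before $h$ enters view $v+1$. Since all honest replicas enter view $v+1$ within $\clock$ (clock skew) of one another and, more importantly, no honest replica enters view $v+1$ until at least $2\Delta$ after the first honest replica forms $\fb_v$ (the $2\Delta$ wait in Step~\ref{smr:newview}), I get the following chain of inequalities: $h$ broadcast all blocks of $\fc$ at or before $t_h$; these messages arrive at every honest replica by $t_h + \Delta$; and $t_h + \Delta \le$ (time $h$ casts its vote, which is before any honest replica enters view $v+1$). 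Therefore every honest replica receives all of $\fc$ before entering view $v+1$.

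The main obstacle, and the step to state carefully, is the corner case where $h$ casts its vote for $B_l$ but the view-change to $v+1$ happens "too soon" relative to the $\Delta$ propagation of the forwarded blocks — i.e., ensuring that the $2\Delta$ wait in the New-view step (Step~\ref{smr:newview}) genuinely gives the forwarded chain blocks time to reach everyone. The argument is that $h$ forwarded the chain blocks \emph{strictly before} it even started the $\Delta$ vote-timer, so those blocks have had at least $\Delta$ head start before $h$'s vote, and $h$'s vote in turn must precede (the first honest replica forming $\fb_v$) $+\, 2\Delta \le$ (any honest replica entering $v+1$); combining, the chain blocks arrive at least $\Delta$ before $h$'s vote and hence comfortably before anyone enters $v+1$. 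I would also handle the honest-leader / intra-view case (where $\fc$ simply consists of previously proposed blocks) identically, since the forwarding rule in Step~\ref{smr:forward} is the same regardless of whether $\fs = \bot$. With clock skew $\clock$ present one adds the usual $+\clock$ slack, which does not affect the conclusion since the $2\Delta$ buffer dominates.
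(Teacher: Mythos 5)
Your proof is correct and follows essentially the same route as the paper's: the honest voter $h$ must have forwarded the entire chain $\Delta$ before casting its vote, so the chain reaches every honest replica by $h$'s vote time, and the $2\Delta$ wait in Step~\ref{smr:newview} combined with the fact that a blame certificate reaching $h$ in time would have suppressed its vote rules out any honest replica entering view $v+1$ earlier. The paper phrases this last step as a contradiction while you argue it directly, but the content is identical.
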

\begin{proof}
    Suppose a block $B_l$ extending a chain $\fc$ is certified in view $v$, then at least one honest replica $h$ sends the \vote message for $B_l$ at time $t_v$ in Step~\ref{smr:vote} in view $v$. 
    Then, at time $t_v-\Delta$ when setting the $\texttt{vote-timer}_l$ in Step~\ref{smr:forward}, $h$ already has the chain $\fc$. Therefore, the honest replica has forwarded all blocks in $\fc$ at time $\leq t_v-\Delta$ according to Step~\ref{smr:forward}.
    Therefore, all blocks in $\fc$ forwarded by $h$ will be received by all honest replicas at time $\leq t_v$. 
    If any honest replica $r$ enters the new view $v+1$ before receiving the blocks in $\fc$, according to Step~\ref{smr:newview} of the protocol, $r$ must have broadcasted the \blame certificate before time $t_v-2\Delta$ due to the $2\Delta$ waiting period before entering the next view. 
    Then, the \blame certificate will reach replica $h$ before time $t_v-\Delta$, which will prevents $h$ from sending the \vote message, causing contradiction.
    Therefore, all honest replicas receive all blocks in $\fc$ before entering view $v+1$.
    
\end{proof}

\subsection{Proof for Lemma \ref{lem:smr:safety}}
\label{app:proof:smr:safety}


\begin{proof}

    Recall that a certified block $B'_{l'}$ with the certificate $\fc_{v'}(B'_{l'})$ ranks no lower than $\fc_{v}(B_l)$ if either (i) $v'=v$ and $l'\geq l$, or (ii) $v'>v$. 
    Suppose that an honest replica directly commits $B_l$ at time $t_c$ in view $v$.
    Then the honest replica receives $f+1$ \vote messages, where at least one of the \vote message is from an honest replica $h$.
    Suppose that $h$ forwards the proposal at time $t$, then it sends the \vote message at time $t_v>t+\Delta$. It is also clear that $t_c>t_v$.
    According to Step~\ref{smr:forward} of the protocol, at time $t_v-\Delta$, $h$ has received all the blocks in the chain that $B_l$ is extending.
    
    First we prove that any block $B'_{l'}$ with $l'\geq l$ certified in view $v$ must equal or extend $B_l$.
    Suppose for the sake of contradiction that some equivocating block $B'_{l'}$ is certified in view $v$. 
    Then at least one \vote for $B'_{l'}$ comes from an honest replica $h'$. 
    Let $t'$ denote the time when $h'$ forwards the proposal, and $t_v'$ denote the time when $h'$ sends the \vote message.
    \begin{itemize}
        \item If $t_v'\leq t_v$, then the forwarded proposal from $h'$ containing the block $B'_{l'}$ will reach $h$ no later than $t'+\Delta<t_v'\leq t_v$. 
        We show that $h$ is able to discover that $B'_{l'}$ equivocates $B_l$ at time $t_v$. 
        Since $h$ receives all the blocks in the chain that $B_l$ is extending at time $t_v-\Delta$, we only need to show that $h$ also receives all the blocks in the chain that $B'_{l'}$ is extending at time $t_v$.
        When the honest replica $h'$ sets its $\texttt{vote-timer}_{l'}$ at time $t_v'-\Delta$, it has all the blocks in the chain $C'$ that $B'_{l'}$ is extending. Then $h'$ has forwarded all blocks in $C'$ at time $t_v'-\Delta$ according to Step~\ref{smr:forward}. 
        Therefore, $h$ should receive all blocks in $C'$ at time $\leq t_v'\leq t_v$, and is able to discover that $B_l$ equivocates with $B'_{l'}$
        This will prevent $h$ from sending the \vote message for $B_l$, causing contradiction.
        
        \item If $t_v'> t_v$, then similarly the forwarded proposal from $h$ containing the block $B_{l}$ will reach $h'$ no later than $t+\Delta<t_v<t_v'$, which will prevent $h'$ from sending the \vote message, also causing contradiction.
    \end{itemize}
    
    Therefore, no other equivocating block $B'_{l'}$ is certified in view $v$.
    
    Now we show that any block $B'_{l'}$ certified in view $v'>v$ must equal or extend $B_l$.
    
    By Lemma \ref{lem:smr:chain}, all honest replicas receive all blocks in the chain $C$ that $B_l$ extends, before entering the new view $v+1$.
    We show that all honest replicas also receive $\fc_v(B_l)$ before entering the new view $v+1$.
    Since the honest replica $h$ directly commits block $B_l$, $h$ also broadcasts the certificate $\fc_v(B_l)$ at Step~\ref{smr:commit} at time $t_c$.
    Therefore,
    all honest replicas receive the certificate $\fc_v(B_l)$ no later than $t_c+\Delta$. Suppose for the sake of contradiction that some honest replica $r$ enters the next view $v+1$ before time $t_c+\Delta$. According to Step~\ref{smr:newview} of the protocol, the honest replica $r$ must have received $f+1$ \blame messages before time $t_c-\Delta$ due to the $2\Delta$ waiting window during view-change. The replica $r$ also broadcasts the \blame certificate before time $t_c-\Delta$ at Step~\ref{smr:newview} according to the protocol. The \blame certificate will reach replica $h$ before time $t_c$ and prevent $h$ from committing, which is a contradiction. Therefore, all honest replicas enter the new view $v+1$ no earlier than $t_c+\Delta$, and receive the certificate $\fc_v(B_l)$ before entering the new view $v+1$. 
    
    Together with the claim that any block $B'_{l'}$ certified in view $v$ must equal or extend $B_l$, the highest certified block at any honest replicas equals or extends $B_l$ when entering view $v+1$.
    If the proposal contains $\fs\neq\bot$, then $\fs$ contains at least one \status message from an honest replica, and thus the block in the proposal must extend $B_l$ in order to get certified.
    If the proposal contains $\fs=\bot$, the honest replicas will only vote for blocks that extends a known chain and the highest certified block, and thus, only blocks that extend $B_l$ can be certified.
    Hence, the highest certified block at any honest replicas still equals or extends $B_l$ in view $v+1$.
    By simple induction, in any future view $v'>v$, the highest certified block at any honest replicas must equal or extend $B_l$, therefore only blocks that equal or extending $B_l$ can be certified.
\end{proof}

\subsection{Proof of Theorem \ref{thm:smr:liveness}}
\label{app:proof:smr:liveness}

\smrliveness*

\begin{proof}

    If the leader is honest, we show that a view-change will not occur and all honest replicas keep committing new blocks.
    By waiting for $2\Delta$ time before entering the new view, an honest leader is able to receive the status messages from all honest replicas, because any honest replica may be at most $\Delta$ later to receive the blame certificate to enter the new view, and the status message takes at most $\Delta$ to reach the leader. 
    Therefore, the honest leader is able to propose a block extending the highest certified block $B$ among $f+1$ distinct signed \status messages as the first proposal.
    By Lemma \ref{lem:smr:chain}, all honest replicas have the chain that $B$ is extending before entering the new view.
    Therefore, all honest replicas will vote for the proposal.
    For next proposals, the honest leader proposes blocks extending the last proposed block, and all honest replicas will vote for the proposals as well.
    
    Now we show that any honest replica is able to commit $p$ blocks within $6\Delta+(p-1)\alpha$ time. 
    After entering a new view, a time period of
    $6\Delta$ is sufficient for the first block to get committed, since 
    (1) the leader may be at most $\Delta$ later to enter the new view, 
    (2) the leader waits for $2\Delta$ to collect the status message after entering the new view,
    (3) the block takes at most $\Delta$ to reach all honest replicas, which triggers the proposal forwarding,
    (4) any honest replica waits for $\Delta$ before sending \vote,
    (5) the \vote messages take at most $\Delta$ to reach all honest replicas. 
    Then after the first block, there should be one block proposal from the leader in every $\alpha$ time that gets committed in a pipeline fashion. 
    Thus any honest replica should be able to commit $p$ blocks within $6\Delta+(p-1)\alpha$ time. 
    Any honest leader has sufficient time and does not equivocate. 
    Thus, any honest leader will not be blamed by any other honest replica, and a view-change will not occur.
    
    On the other hand, any Byzantine leader will be replaced by a view-change if it sends equivocating blocks or proposals, or does not propose the blocks quickly enough.
    More specifically, if the leader sends equivocating blocks or proposals, then all honest replicas will learn the equivocation and thus send \blame messages to trigger a view-change.
    If the leader does not propose the blocks quickly enough so that all honest replicas send \blame messages, then a view-change is triggered.
    If at least one honest replica is keep committing in time, the \vote certificate broadcasted by this replica when it commits can lead all honest replicas to keep committing new blocks, unless some honest replica gathers a \blame certificate which will lead all honest replicas to perform the view-change.
\end{proof}

\section{Results for Mobile Sluggish Faults}
\label{app:msf}

In this section, we consider the mobile sluggish fault model \cite{chan2018pili, guo2019synchronous, synchotstuff} that 
models the temporary violation of the message delay bound at some honest replicas at a given time. 
The synchronous model assumes that each message sent by honest replicas can reach any honest replica within $\Delta$ time. Such requirement is crucial for the correctness of our protocols \ba, \bb or \smr, where each honest replica expects to receive forwarded proposals from other honest replicas to detect conflicting proposals. 
%
%
In practice, such unforeseen aberrations in the network may happen to any honest replica during the execution of the protocol, especially for BFT SMR protocols that are designed to keep commit values.

We first present the formal definition of mobile sluggish faults \cite{chan2018pili, guo2019synchronous, synchotstuff}, and show how to extend our protocols to tolerate the mobile sluggish faults. 
To illustrate our techniques, we only present the extension for the Byzantine fault tolerant state machine replication protocol \smr.
Similar approaches apply to \ba and \bb as well.

\subsection{The Mobile Sluggish Fault Model}
\label{sec:ms:model}

In the mobile sluggish fault model \cite{chan2018pili, guo2019synchronous, synchotstuff}, an honest replica is either {\em sluggish} or {\em prompt} at a given time. Moreover, the set of sluggish replicas can change over time.
The messages sent or received by sluggish replicas may be delayed in the network, while the prompt replicas follow the message delay bound. 
Formally, if an honest replica $r_1$ is prompt at time $t_1$, then any message sent by $r_1$ at time $\leq t_1$ will be received by any honest replica that is prompt at time $t_2\geq t_1+\Delta$. 
Intuitively, in the mobile sluggish model, any message sent by a sluggish replica would satisfy the maximum message delay bound when the replica becomes prompt.

Following the literature \cite{synchotstuff}, we use $f$ to denote the total number of faults, $d$ to denote the number of honest but sluggish replicas, and $f-d$ to denote the number of Byzantine replicas.
The set of sluggish replicas is mobile in the sense that it can change at any instance of time.
We assume the number of honest and prompt replicas is always $>n/2$ at any time, which is necessary to solve Byzantine agreement or broadcast under the mobile sluggish fault model \cite{guo2019synchronous, synchotstuff}.
Therefore, the total number of replica is $n \geq 2f+1$, and the number of honest and prompt replicas at any time is at least $f+1$.
Without loss of generality, we assume that $n=2f+1$.

\subsection{Protocol \smrms}

In this section, we present a Byzantine fault tolerant state machine replication protocol \smrms under the mobile sluggish model, as in Figure \ref{fig:smr:ms}.
The notations follow from Section~\ref{sec:smr}.
To tolerate mobile sluggish faults, we apply a technique that adds extra communication rounds in the protocol to ensure the message delivery, while achieving the \latency of $\Delta+4\delta$.
More specifically, each Forward/Vote/Blame step in \smr protocol is replaced by two steps in the new protocol, namely Forward, Timer, Vote1, Vote2, Blame1 and Blame2.
The second step (Timer/Vote2/Blame2) can proceed after the replica receives $f+1$ \ackone/\voteone/\blameone messages.

Since now each Forward/Vote/Blame step becomes two steps, we specify which vote or blame message the certificate consists of.
Let $\fc_v(B_k)$ denote the certificate for height-$k$ block $B_k$ in view $v$, consisting of $f+1$ valid \voteone messages for block $B_k$.
Let $\fb_v$ denote the blame certificate in view $v$, consisting of $f+1$ valid \blameone messages.

{\bf Why require $f+1$ \ackone/\voteone/\blameone messages?}
Here we give some intuitive explanations on why the protocol requires $f+1$ \ackone/\voteone/\blameone messages to proceed.
Under the synchronous model, when an honest replica sends an \ack, \vote or \blame message at time $t$, it is guaranteed that all other honest replica can receive the message at time $t+\Delta$.
However, with mobile sluggish faults, the honest replica may be sluggish and the message cannot be received by other replicas. Thus, any protocol that relies on such a single message will fail, such as the protocol \smr that relies on proposal forwarding to detect leader equivocation.
An natural idea is to rely on $\geq f+1$ messages, then at least one of the message is from an honest and prompt replica. 
Therefore, we replace the each Forward, Vote and Blame step in \smr with two communication steps in \smrms, to ensure that an honest replica proceeds only after $f+1$ \ackone/\voteone/\blameone messages are received.

\begin{figure}
    \centering
    \begin{mybox}

\textbf{Steady State Protocol for Replica $i$}

Let $v$ be the current view number and replica $L$ be the leader of the current view. The leader proposes a block every $\alpha$ time, where $\alpha$ is a parameter.
\begin{enumerate}
    \item\label{smrh:propose} {\bf Propose.} The leader $L$ sends $\langle \texttt{propose}, B_k, \fs, v \rangle_L$ to all other replicas, 
    where $B_k=(b_k,h_{k-1})$ is a height-$k$ block, containing a batch of new client requests $b_k$ and a hash digest $h_{k-1}=H(B_{k-1})$ of a height-$(k-1)$ block $B_{k-1}$.
    For the first proposal in a new view after a view-change,
    $\fs$ is a set of $f+1$ distinct signed \status messages received during the view-change, 
    and $B_{k-1}$ is the highest certified block among blocks in $\fs$.
    Otherwise, $\fs=\bot$ and $B_{k-1}$ is the last block proposed by $L$.
    
    \item\label{smrh:forward} {\bf Forward.} Upon receiving a valid new proposal $\langle \texttt{propose}, B_k, v \rangle_L$ for a height-$k$ block, 
    forward the proposal to all other replicas.
    Once one of the following conditions is true: 
    (1) $\fs$ contains $f+1$ distinct signed \status messages, $B_k$ extends the highest certified block in $\fs$, and $B_k$ extends a chain known to the replica, or
    (2) $\fs=\bot$, $B_k$ extends the highest certified block known to the replica, and $B_k$ extends a chain known to the replica, perform the following.
    Broadcast any block in the chain that the replica hasn't forwarded yet, and broadcast an \ackone in the form of $\langle \ackone, B_k, v \rangle_i$.
    
    \item\label{smrh:timer} {\bf Timer.}
    Upon receiving $f+1$ distinct \ackone messages $\langle \ackone, B_k, v \rangle$,
    set $\texttt{vote-timer}_k$ to $\Delta$ and start counting down.
    
    \item\label{smrh:vote1} {\bf Vote1.} When $\texttt{vote-timer}_k$ reaches $0$, if no equivocating blocks/proposals signed by $L$ or a blame certificate $\fb_v$ is received, send a \voteone to all other replicas in the form of $\langle \voteone, B_k, v \rangle_i$.
    
    \item\label{smrh:vote2} {\bf Vote2.} Upon receiving $f+1$ distinct \voteone messages $\langle \voteone, B_k, v \rangle$, form a \voteone certificate $\fc_v(B_k)$.
    If no equivocating blocks/proposals signed by $L$ or a blame certificate $\fb_v$ is received,
    broadcasts \voteone certificate $\fc_v(B_k)$ and 
    a \votetwo in the form of $\langle \votetwo, B_k, v \rangle_i$. 
    
    \item\label{smrh:commit} {\bf Commit.} 
    Upon receiving $f+1$ distinct signed \votetwo messages for block $B_k$, if no equivocating blocks/proposals signed by $L$ or a blame certificate $\fb_v$ are received, broadcast these $f+1$ \votetwo messages, and commit $B_k$ with all its ancestors.
    
\end{enumerate}

\textbf{View-change Protocol for Replica $i$}

Let $L$ and $L'$ be the leader of view $v$ and $v+1$ respectively.
\begin{enumerate}
    \item\label{smrh:blame1} {\bf Blame1.} If less than $p$  valid blocks are committed in $8\Delta+(p-1)\alpha$ time in view $v$, broadcast $\langle \blameone, v\rangle_i$. If equivocating blocks/proposals signed by $L$ are received, broadcast $\langle \blameone, v\rangle_i$ and the two equivocating blocks/proposals.
    
    \item\label{smrh:blame2} {\bf Blame2.} Upon receiving $f+1$ valid \blameone messages $\langle \blameone, v\rangle$, broadcast a \blameone certificate $\fb_v$ of $f+1$ \blameone messages, and a \blametwo in the form of $\langle \blametwo, v \rangle_i$. 
    
    \item\label{smrh:newview} {\bf New-view.}
    Upon gathering $f+1$ distinct $\langle \blametwo, v\rangle$ messages, broadcast these messages, abort all \texttt{vote-timer}(s) and stop sending \ack or \vote in view $v$.
    Wait for $2\Delta$ time and enter view $v+1$.
    Upon entering view $v+1$, send the new leader $L'$ a $\langle \status, B_{k'}, \fc_{v'}(B_{k'}), v \rangle_i$ message where $\fc_{v'}(B_{k'})$ is certificate for the highest certified block $B_{k'}$.
    (Ignore any block certified in view $\leq v$ that is received after this point.)
    If the replica is the new leader $L'$, wait for another $2\Delta$ time upon entering view $v+1$.
    
\end{enumerate}
    \end{mybox}
    \caption{\smrms under the Mobile Sluggish Fault Model}
    \label{fig:smr:ms}
\end{figure}

\stitle{Correctness of Protocol \smrms.}
We first prove two key claims for protocol \smrms, similar to Lemma \ref{lem:smrh:chain} and \ref{lem:smr:safety} for the protocol \smr under synchrony.

\begin{lemma}\label{lem:smrh:chain}
    If a block $B_l$ extending a chain $\fc$ is certified in view $v$, then $f+1$ honest replicas have received all blocks in $\fc$ before entering view $v+1$.
\end{lemma}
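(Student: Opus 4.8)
The plan is to follow the proof of Lemma~\ref{lem:smr:chain} (its synchronous counterpart) essentially line by line, replacing each appeal to the $\Delta$ synchrony bound by the quorum mechanism that \smrms introduces, and weakening ``all honest replicas'' to ``$f+1$ honest replicas'' precisely where a single honest sender may now be sluggish. First I would unwind the certificate: since $B_l$ extending $\fc$ is certified in view $v$, the certificate $\fc_v(B_l)$ consists of $f+1$ distinct \voteone messages for $B_l$; at most $f-d$ of their senders are Byzantine, so some honest replica $h$ sent \voteone for $B_l$ in Step~\ref{smrh:vote1}, say at local time $t_v$. Then $h$ started $\texttt{vote-timer}_l$ at $t_v-\Delta$ only after collecting $f+1$ distinct \ackone messages for $B_l$ (Step~\ref{smrh:timer}), and to have passed Step~\ref{smrh:forward} for $B_l$ it must have held the whole chain $\fc$ and (re)broadcast every block of $\fc$ no later than $t_v-\Delta$. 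The same conclusion applies to every honest sender among those $f+1$ \ackone messages and to every honest sender of a \voteone for $B_l$: each such replica held all of $\fc$ while still acting in view $v$, hence before entering view $v+1$.

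Next I would push this to $f+1$ honest replicas using re-forwarding together with the blame / view-change window, mirroring Lemma~\ref{lem:smr:chain}. Because an honest-and-prompt replica re-forwards the chain in Step~\ref{smrh:forward} and $>n/2$ replicas are honest-and-prompt at every instant, once any honest-and-prompt replica holds $\fc$ it relays it to every replica that is honest-and-prompt $\Delta$ later. Combining this with the view-change bookkeeping: if strictly fewer than $f+1$ honest replicas held all of $\fc$ upon entering view $v+1$, then by the $2\Delta$ wait before entering $v+1$ in Step~\ref{smrh:newview}, unwinding through an $f+1$-sized \blametwo quorum and then an $f+1$-sized \blameone quorum produces an honest-and-prompt replica that broadcasts $\fb_v$ early enough for it to reach $h$ before $t_v$; the guard ``no blame certificate $\fb_v$ received'' in Step~\ref{smrh:vote1} would then have prevented $h$ from sending its \voteone for $B_l$, a contradiction. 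The slack between ``$f+1$ honest'' and ``all honest'' is exactly the $d$ sluggish honest replicas, which may legitimately miss $\fc$ in the relevant window; and $f+1$ is precisely what the downstream safety argument (the \smrms analog of Lemma~\ref{lem:smr:safety}) needs, so that any $f+1$-sized \status quorum gathered by a future honest leader intersects the set of honest replicas that hold $\fc$.

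The delicate part — the main obstacle — is this second step. In the synchronous proof a single honest forward reaches everyone within $\Delta$, but here every such ``within $\Delta$'' must be re-derived by exhibiting a replica that is honest \emph{and prompt at that exact instant}, not merely honest, using the ``$>n/2$ honest-and-prompt'' invariant and quorum intersection across the \ackone, \voteone, \blameone, and \blametwo quorums, and then checking that the $2\Delta$ view-change window still dominates the resulting slack. Everything else — the guards in Steps~\ref{smrh:vote1}--\ref{smrh:commit} and the chain-extension checks in Step~\ref{smrh:forward} — transfers essentially verbatim from the proof of Lemma~\ref{lem:smr:chain}.
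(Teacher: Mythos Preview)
Your overall strategy --- port Lemma~\ref{lem:smr:chain} by replacing each ``one honest sender reaches everyone'' step with quorum intersection against the $f{+}1$ honest-and-prompt replicas --- matches the paper's, and your third paragraph correctly locates the delicate point. But the concrete plan in your second paragraph has a real gap: you fix a single honest \voteone-sender $h$ and aim to show that the blame certificate $\fb_v$ ``reach[es] $h$ before $t_v$''. Nothing guarantees that $h$ is prompt at $t_v$ (or at any relevant instant), so even a prompt broadcaster of $\fb_v$ need not reach $h$, and the contradiction collapses. Quorum intersection at a fixed instant does give you \emph{some} prompt replica in the \voteone quorum, but not $h$ in particular, and that other replica may already have sent its \voteone --- so you cannot block it either.

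The ingredient you are missing is a minimality argument. With $A$ the set of $f{+}1$ \voteone senders, the paper takes $t$ to be the \emph{earliest} time at which some $a\in A$ is honest-and-prompt at $t$ and has already sent \voteone by $t$. The \ackone quorum that $a$ saw at $t'\le t-\Delta$ then yields, by intersection, a prompt forwarder of $\fc$ at $t'$, so the $f{+}1$ honest-and-prompt replicas at time $t'{+}\Delta$ form the witness set $R$ directly (rather than arguing from the negation ``fewer than $f{+}1$ honest hold $\fc$'', which gives no timing bound). For the contradiction one unwinds the \blametwo quorum once --- a \blametwo sender already broadcasts $\fb_v$ in Step~\ref{smrh:blame2}, so no second unwinding through \blameone is needed --- to get a prompt broadcaster of $\fb_v$ at time $\le t'{-}\Delta\le t{-}2\Delta$, and then intersects $A$ with the honest-and-prompt set at time $t{-}\Delta$ to obtain some $a'\in A$; by minimality of $t$, this $a'$ has \emph{not yet} sent \voteone, so receiving $\fb_v$ blocks it, contradicting $a'\in A$. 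One smaller point: in \smrms the timer in Step~\ref{smrh:timer} is set upon receiving $f{+}1$ \ackone, independently of whether the replica itself completed Step~\ref{smrh:forward}, so your $h$ need not hold $\fc$; the paper accordingly extracts the chain from an honest-and-prompt \ackone sender rather than from $h$.
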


\begin{proof}
    Suppose a block $B_l$ extending a chain $\fc$ is certified in view $v$, then a set $A$ of $f+1$ replicas sends \voteone messages for $B_l$.
    Let $t$ denote the earliest time point such that there exist a replica $a\in A$ satisfying the following: (i) $a$ is honest and prompt at time $t$ and (ii) $a$ sends \voteone  at time $\leq t$. 
    The above definition of time $t$ is valid, since the latest time point when an honest replica in $A$ sends \voteone satisfies the above two conditions, which means the candidate set for $t$ is non-empty.
    Since replica $a$ sends \voteone at time $\leq t$, at time $t'\leq t-\Delta$ when setting the $\texttt{vote-timer}_l$ in Step~\ref{smrh:timer}, $a$ receives $f+1$ distinct \ack messages. Since there are $f+1$ honest and prompt replicas at any time, at time $t'$ at least one replica that sends \ack is honest and prompt. Then, this honest and prompt replica already has the chain $\fc$ and forwards all blocks in $\fc$ at time $t'$ according to Step~\ref{smr:forward}.
    Therefore, the set $R$ of $f+1$ honest and prompt replicas at time $t'+\Delta$ receive all blocks in $\fc$. 
    If any replica $r\in R$ enters the new view $v+1$ before receiving the blocks in $\fc$, according to Step~\ref{smr:newview} of the protocol, $r$ must have received $f+1$ distinct \blametwo messages before time $t'-\Delta$ due to the $2\Delta$ waiting period before entering the next view. 
    Therefore, $f+1$ replicas broadcast \blametwo and the \blameone certificate before time $t'-\Delta$. Then, at least one replica $r'$ above is honest and prompt at time $t'-\Delta\leq t-2\Delta$, and has broadcasted the \blameone certificate.
    Consider the honest and prompt replica $a'\in A$ at time $t-\Delta$. By the definition of $t$, replica $a'$ has not set \voteone at time $t-\Delta$. 
    Since replica $a'$ is honest and prompt at time $t-\Delta$, replica $r'$ is honest and prompt at time $t'-\Delta\leq t-2\Delta$ and broadcasts the \blameone certificate, $a'$ receives the \blameone certificate and thus will not send \voteone, causing contradiction.
    Therefore, the set $R$ of $f+1$ honest replicas receive all blocks in $\fc$ before entering view $v+1$.
    
\end{proof}

\begin{lemma}\label{lem:smrh:safety}
    If an honest replica directly commits a block $B_l$ in view $v$, 
    then a certified block that ranks no lower than $\fc_v(B_l)$ must equal or extend $B_l$.
\end{lemma}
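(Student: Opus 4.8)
The plan is to retrace the proof of Lemma~\ref{lem:smr:safety} step by step, making two systematic substitutions dictated by the mobile sluggish model. First, wherever that proof speaks of ``the honest replica $h$ that voted for $B_l$'', we instead use the \emph{earliest honest-and-prompt witness} device already employed in the proof of Lemma~\ref{lem:smrh:chain}: among the $f+1$ replicas whose \voteone messages form the certificate $\fc_v(B_l)$, let $t$ be the earliest time at which one of them is honest and prompt and has already sent its \voteone, and then reason about that replica and about the $f+1$ \ackone messages it must have collected at least $\Delta$ earlier (Steps~\ref{smrh:forward}--\ref{smrh:timer}). Second, wherever Lemma~\ref{lem:smr:safety} concludes ``\emph{all} honest replicas receive $X$ before entering view $v+1$'', here we can only conclude ``\emph{$f+1$} honest replicas do''; this still suffices because the always-$>n/2$ honest-and-prompt assumption together with the quorum size $f+1$ out of $2f+1$ forces any two size-$(f+1)$ sets of replicas to intersect. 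As in the synchronous proof, the argument splits into Part~A --- no block $B'_{l'}$ with $l'\ge l$ equivocating $B_l$ is certified in view $v$ --- and Part~B --- by induction on views $v'>v$, only blocks equal to or extending $B_l$ can be certified.

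For Part~A, suppose some equivocating $B'_{l'}$ with $l'\ge l$ is also certified in view $v$, and let $t$ and $t'$ be the earliest-honest-and-prompt-witness times of $\fc_v(B_l)$ and $\fc_v(B'_{l'})$ respectively. Because a \voteone for a block is sent only after the $\Delta$ \texttt{vote-timer} set in Step~\ref{smrh:timer} expires, and that timer is set only after $f+1$ \ackone messages arrive, for each of $B_l$ and $B'_{l'}$ at least one honest-and-prompt replica forwarded the leader's proposal, together with all blocks of the chain it extends, at least $\Delta$ before the corresponding witness time. Splitting into the two cases $t\le t'$ and $t>t'$ --- the analogue of the $t_v'\le t_v$ / $t_v'>t_v$ dichotomy in the proofs of Lemma~\ref{lem:smr:safety} and Lemma~\ref{lem:smrh:chain} --- the honest-and-prompt witness of the later of the two certificates receives the forwarded equivocating proposal and all ancestors needed to detect the conflict strictly before it could send its \voteone. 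Hence it does not vote, so the later certificate cannot contain $f+1$ \voteone messages, a contradiction.

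For Part~B, Lemma~\ref{lem:smrh:chain} already yields $f+1$ honest replicas that hold the entire chain $B_l$ extends when they enter view $v+1$. The new ingredient is to show that $f+1$ honest replicas also hold $\fc_v(B_l)$ by then: each of the $f+1$ replicas that sent a \votetwo for $B_l$ broadcast $\fc_v(B_l)$ in Step~\ref{smrh:vote2}, so taking the earliest honest-and-prompt such replica, at time $t''$ say, the certificate reaches every replica that is honest and prompt at $t''+\Delta$; and, exactly as in the Part~B argument for \smr and in Lemma~\ref{lem:smrh:chain}, any honest-and-prompt replica that had entered view $v+1$ before $t''+\Delta$ would --- via the $2\Delta$ wait of Step~\ref{smrh:newview} --- have broadcast a \blameone certificate $\fb_v$ early enough to stop that \votetwo, a contradiction. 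Since a \votetwo for $B_l$ is sent only after a \voteone, itself sent a $\Delta$ countdown after $f+1$ \ackone and hence after the chain forwarding, one gets $t''\ge t'$ for the time $t'$ at which the chain was forwarded in the proof of Lemma~\ref{lem:smrh:chain}, so the $f+1$ replicas that are honest and prompt at $t''+\Delta$ hold \emph{both} the chain and $\fc_v(B_l)$, i.e.\ a highest certified block equal to or extending $B_l$, when they enter view $v+1$. Consequently every set of $f+1$ \status messages the new leader of view $v+1$ assembles contains one from such a replica: if $\fs\ne\bot$ the proposed block must extend the highest certified block in $\fs$, which by Part~A and the induction hypothesis equals or extends $B_l$; if $\fs=\bot$ honest-and-prompt replicas send \ackone and \voteone only for blocks extending their own highest certified block. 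Either way only blocks equal to or extending $B_l$ can be certified in view $v+1$, and the invariant ``$\ge f+1$ honest replicas keep a highest certified block equal to or extending $B_l$'' is preserved; induction over all views $v'>v$ completes the proof.

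I expect the main obstacle to be precisely this gluing step: turning the two \emph{separate} facts ``$f+1$ honest replicas hold the chain'' and ``$f+1$ honest replicas hold $\fc_v(B_l)$'' into a \emph{single} quorum of $f+1$ honest replicas that provably hold both on entering the next view, which requires tracking the ordering $t'<t''$ of the relevant witness times under non-lock-step, time-varying promptness and checking that for each doubled step (Forward/Timer, Vote1/Vote2, Blame1/Blame2) an honest-and-prompt witness exists at exactly the instant needed. This is exactly where the extra \votetwo round, and the requirement that a commit collect $f+1$ \votetwo messages rather than merely the certificate $\fc_v(B_l)$, earn their keep.
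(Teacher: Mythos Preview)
Your proposal is correct and follows essentially the same approach as the paper: the same Part~A/Part~B split, the same earliest-honest-and-prompt-witness device borrowed from Lemma~\ref{lem:smrh:chain}, and the same gluing argument that the set of replicas honest-and-prompt at time $t''+\Delta$ (where $t''$ is the \votetwo witness time) receives both the chain and $\fc_v(B_l)$ before entering view $v+1$. The only cosmetic difference is that in Part~A the paper defines $t$ as the earliest time an \emph{honest} replica in the \voteone quorum votes (and then separately intersects with the honest-and-prompt set at time $t'$), whereas you fold promptness into the definition of $t$ from the start; both versions work.
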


\begin{proof}
    Recall that a certified block $B'_{l'}$ with the certificate $\fc_{v'}(B'_{l'})$ ranks no lower than $\fc_{v}(B_l)$ if either (i) $v'=v$ and $l'\geq l$, or (ii) $v'>v$. 
    Suppose that an honest replica $h$ directly commits $B_l$ in view $v$.

    First we prove that any block $B'_{l'}$ with $l'\geq l$ certified in view $v$ must equal or extend $B_l$.
    Suppose for the sake of contradiction that some equivocating block $B_{l'}'$ is certified in view $v$, then there is a set $R'$ of $f+1$ replicas that sends $\voteone'$ for $B_{l'}'$. 
    Since the replica $h$ commits $B_l$, it receives $f+1$ valid \votetwo messages, which implies that at least one honest replica has sent \votetwo after receiving $f+1$ \voteone messages.
    Let the set of $f+1$ replicas that sends \voteone messages above be $R$, and let $t$ be the earliest time point that some honest replica $r$ in $R$ sends \voteone. The definition of $t$ is valid since at least one replica in $R$ is honest.
    Also let $t'$ be the earliest time point that some honest replica $r'$ in $R'$ sends $\voteone'$.
    \begin{itemize}
        \item Suppose that $t'\leq t$. According to the protocol, at time point $t'-\Delta$, the honest replica $r'$ receives $f+1$ $\ackone'$ messages for $B_{l'}'$ and thus $f+1$ replicas have sent $\ackone'$ at $t'-\Delta$.
        Since there are $f+1$ honest and prompt replicas at any time point, at least one of the above replicas that sends $\ackone'$ is honest and prompt at time $t'-\Delta$. According to Step~\ref{smrh:forward}, this replica has also sent all blocks in the chain $\fc'$ that $B_{l'}'$ extends.
        Similarly, at least one of the replicas in $R$ is honest and prompt at time $t'\leq t$, and this replica should receive the $\ackone'$ message and all blocks in the chain $\fc'$.
        Since this replica already has the chain that $B_l$ extends according to Step~\ref{smrh:forward}, it will discover that $B_l$ and $B'_{l'}$ equivocates and thus not sending \voteone. This is a contradiction, hence no equivocating block $B_{l'}'$ can be certified in this case.
        
        \item Suppose that $t'>t$. Similar to the previous case, at time point $t-\Delta$, at least one honest and prompt replica sends $\ackone$. Again, at least one honest and prompt replica in $R'$ should receive the $\ackone$ message and not send $\voteone'$. This contradicts the assumption that $b'$ is certified, and hence no equivocating block $B_{l'}'$ can be certified in this case.
    \end{itemize}
    
    Now we show that any block $B'_{l'}$ certified in view $v'>v$ must equal or extend $B_l$.
    
    We first show that $f+1$ honest replicas receive $\fc_v(B_l)$ before entering the new view $v+1$.
    Since the honest replica $h$ directly commits $B_l$, $h$ receives distinct \votetwo messages from a set $A$ of $f+1$ replicas.
    Let $t$ denote the earliest time point such that there exist a replica $a\in A$ satisfying the following: (i) $a$ is honest and prompt at time $t$ and (ii) $a$ sends \votetwo before or at time $t$. 
    The above definition of time $t$ is valid, since the latest time point when an honest replica in $A$ sends \votetwo satisfies the above two conditions, which means the candidate set for $t$ is non-empty.
    Since replica $a$ is honest and prompt at time $t$ and sends \votetwo before or at time $t$,
    by the definition of the honest and prompt, replica $a$'s \votetwo message and \voteone certificate $\fc_v(B_l)$ will reach the set $R$ of all honest and prompt replica at time $t+\Delta$.
    We will prove that $R$ is the set of $f+1$ honest replicas that receive $\fc_v(B_l)$ before entering the new view $v+1$.
    Suppose for the sake of contradiction, any honest replica $r\in R$ enters the new view $v+1$ at time $t'+\Delta$ where $t'<t$ before receiving $\fc_v(B_l)$.
    According to the protocol, the honest replica $r$ must have received $f+1$ \blametwo messages at time point $t'-\Delta$ due to the $2\Delta$ waiting period before entering the next view. 
    Among the $f+1$ replicas that send \blametwo above, at least one replica $r'$ is honest and prompt at time $t'-\Delta$. According to the protocol, $r'$ receives the \blameone certificate before time $t'-\Delta$ and forwards the \blameone certificate to all other replicas.
    Now consider the honest and prompt replica $a'\in A$ at time $t'$, since $t'<t$, by the definition of $t$, $a'$ has not sent \votetwo. 
    Since $r'$ is honest and prompt at time $t'-\Delta$, $a'$ is honest and prompt at time $t'$,  $a'$ should receive the \blameone certificate $\fb_v$ sent by $r'$ at time $t'$. This will prevent $a'$ from sending the \votetwo message, causing contradiction. 
    Hence, $R$ is the set of $f+1$ honest replicas that receive $\fc_v(B_l)$ before entering the new view $v+1$.
    By a similar proof of Lemma \ref{lem:smrh:chain}, we can also show that $R$ receive the chain that $B_l$ extends.
    
    Together with the claim that any block $B'_{l'}$ certified in view $v$ must equal or extend $B_l$, the highest certified block at any honest replicas equals or extends $B_l$ when entering view $v+1$.
    If the proposal contains $\fs\neq\bot$, then $\fs$ contains at least one \status message from an honest replica, and thus the block in the proposal must extend $B_l$ in order to get certified.
    If the proposal contains $\fs=\bot$, the honest replicas will only vote for blocks that extends a known chain and the highest certified block, and thus, only blocks that extend $B_l$ can be certified.
    Hence, the highest certified block at any honest replicas still equals or extends $B_l$ in view $v+1$.
    By simple induction, in any future view $v'>v$, the highest certified block at any honest replicas must equal or extend $B_l$, therefore only blocks that equal or extending $B_l$ can be certified.
\end{proof}

\begin{theorem}[Safety]\label{thm:smrh:safety}
    Honest replicas always commit the same block $B_k$ for each height $k$.
\end{theorem}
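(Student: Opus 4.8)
The plan is to mirror the proof of Theorem~\ref{thm:smr:safety} for the synchronous model, replacing the use of Lemma~\ref{lem:smr:safety} by Lemma~\ref{lem:smrh:safety}. First I would carry over from Section~\ref{sec:smr} the notions of \emph{directly} and \emph{indirectly} committed blocks: a block is directly committed if some honest replica commits it at Step~\ref{smrh:commit} (upon $f+1$ distinct \votetwo messages with no conflicting evidence), and a block is indirectly committed if it is an ancestor of a directly committed block. Because of block chaining, it suffices to show that any two directly committed blocks lie on a common chain; everything else follows by taking ancestors.

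I would then note that whenever an honest replica directly commits $B_l$ in view $v$, the certificate $\fc_v(B_l)$ exists: a \votetwo for $B_l$ is sent at Step~\ref{smrh:vote2} only after a \voteone certificate $\fc_v(B_l)$ has been formed, so the $f+1$ \votetwo messages that trigger the commit witness the existence of $\fc_v(B_l)$. This is exactly the object referred to in the statement of Lemma~\ref{lem:smrh:safety}.

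Now suppose $B_k$ and $B_k'$ are committed at height $k$ by two honest replicas, where $B_k$ is committed because a block $B_l$ (with $l \ge k$) is directly committed in view $v$, and $B_k'$ is committed because $B_{l'}'$ (with $l' \ge k$) is directly committed in view $v'$. Without loss of generality suppose $v \le v'$, and if $v = v'$ suppose further that $l \le l'$; then $B_{l'}'$ is a certified block whose certificate $\fc_{v'}(B_{l'}')$ ranks no lower than $\fc_v(B_l)$. By Lemma~\ref{lem:smrh:safety}, $B_{l'}'$ equals or extends $B_l$. Since $B_k$ and $B_k'$ are the height-$k$ ancestors of $B_l$ and $B_{l'}'$ respectively, this forces $B_k' = B_k$, which is the claimed safety property.

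The bulk of the difficulty is not in this theorem, which is a short corollary of Lemma~\ref{lem:smrh:safety}, but is already discharged in Lemmas~\ref{lem:smrh:chain} and~\ref{lem:smrh:safety}: there, every appeal to ``an honest replica's forwarded message reaches all others within $\Delta$'' must be replaced by the mobile-sluggish substitute ``among any $f+1$ replicas that issued a given message at least one was honest and prompt, so its message reaches all honest-and-prompt replicas --- of which there are always at least $f+1$ --- within $\Delta$'', which in turn requires carefully choosing the \emph{earliest} honest-and-prompt sender to line up the $\Delta$ and $2\Delta$ windows. The only point specific to this theorem is the observation above that a direct commit certifies its block.
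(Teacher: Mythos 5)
Your proof is correct and follows essentially the same route as the paper: reduce to two directly committed blocks, order their certificates by (view, height), and invoke Lemma~\ref{lem:smrh:safety} to conclude one extends the other. Your added remark that a direct commit at Step~\ref{smrh:commit} certifies the block (since \votetwo is only sent after forming $\fc_v(B_l)$) is a small justification the paper leaves implicit, but the argument is the same.
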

\begin{proof}
    Suppose two blocks $B_k$ and $B_k'$ are committed at height $k$ at two honest replicas. Suppose $B_k$ is committed due to $B_l$ being directly committed in view $v$, and $B_k'$ is committed due to $B_{l'}'$ being directly committed in view $v'$.
    Without loss of generality, suppose $v\leq v'$, and for $v=v'$, further assume that $l\leq l'$. 
    Since $B_l$ is directly committed and $B_{l'}$ is certified and ranks no lower than $\fc_v(B_l)$,
    by Lemma \ref{lem:smrh:safety}, $B_{l'}$ must equal or extend $B_l$. Thus, $B_k'=B_k$.
\end{proof}

\begin{theorem}[Liveness]\label{thm:smrh:liveness}
    All honest replicas keep committing new blocks during the periods in which all honest replicas stay prompt.
\end{theorem}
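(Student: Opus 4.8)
The plan is to mirror the liveness proof of \smr (Theorem~\ref{thm:smr:liveness}), upgrading each argument to account for the doubled message rounds of \smrms and for the standing assumption that, throughout the period in question, every honest replica is prompt --- so at all times there are $\ge f+1$ honest replicas whose messages obey the $\Delta$ bound, and in particular the protocol behaves synchronously among honest replicas. Fix such a period and assume it is long enough to contain a full view; everything below is stated for times inside it.

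I would first argue that a view with an honest leader $L$, once entered during the period, keeps committing new blocks and is never abandoned. Since $L$ is honest it never equivocates, so no honest replica ever broadcasts the equivocation-triggered \blameone; it then remains to show that $L$ produces $p$ committed blocks within $8\Delta+(p-1)\alpha$ of each honest replica entering the view, so the timeout of Step~\ref{smrh:blame1} never fires. I would chain the delay bounds after entry: honest replicas enter the view within $\Delta$ of each other (the first to gather $f+1$ \blametwo messages forwards them, and all then wait $2\Delta$); $L$ waits $2\Delta$ and thereby collects $f+1$ \status messages; the first proposal together with the ancestor blocks of the extended chain reaches all honest replicas within $\Delta$ (using Lemma~\ref{lem:smrh:chain} --- which guarantees $f+1$ honest replicas already hold that chain --- together with the re-broadcasting in Step~\ref{smrh:forward}), triggering \ackone; the $f+1$ \ackone messages arrive within $\Delta$, starting \texttt{vote-timer}; after $\Delta$ honest replicas send \voteone; $f+1$ \voteone messages arrive within $\Delta$ and \votetwo is sent; $f+1$ \votetwo messages arrive within $\Delta$ and the block commits. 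Summing ($\Delta$ entry skew $+\,2\Delta$ status $+$ the remaining rounds, which total $8\Delta$) and observing that subsequent blocks commit in a pipeline every $\alpha$ time gives the bound. One also needs that every honest replica actually sends \voteone for the first proposal: it extends the highest certified block among the $f+1$ \status messages $L$ collected, every honest replica obtains that block's chain as above, and there is neither an equivocating block nor a blame certificate in an honest-leader view, so the condition of Step~\ref{smrh:vote1} holds.

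Next I would dispose of a Byzantine leader. If it ever shows two equivocating blocks/proposals to any honest replica, that replica broadcasts them with \blameone (Step~\ref{smrh:blame1}); since all honest replicas are prompt, every honest replica then sends \blameone, so a \blameone certificate $\fb_v$ forms at every honest replica, \blametwo is broadcast, $f+1$ \blametwo messages are gathered, and every honest replica view-changes together (Step~\ref{smrh:newview}). If the leader merely stalls, no honest replica commits $p$ blocks in time, so every honest replica sends \blameone and the same cascade fires; a mixed strategy still causes at least one honest replica to blame, and completeness of the cascade follows as before. Round-robin leader election then installs an honest leader within $n$ views, each stalled view lasting only the $8\Delta+(p-1)\alpha$ timeout plus $O(\Delta)$ for the cascade and transition, so an honest-leader view is reached within a bounded prefix of the period; by the previous paragraph that view keeps committing. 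Finally, as in \smr, whenever one honest replica commits it broadcasts the $f+1$ \votetwo messages, which reach all honest replicas within $\Delta$ and make them commit the same block (and all its ancestors, consistently by Lemma~\ref{lem:smrh:safety}), unless a \blameone/\blametwo cascade has instead moved every honest replica to the next view. Combining these, all honest replicas keep committing new blocks during the period.

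The main obstacle I expect is the careful bookkeeping that the two extra rounds (\ackone before the timer, \votetwo before the commit, and \blameone/\blametwo in place of \blame) still let an honest leader meet the $8\Delta+(p-1)\alpha$ deadline, and that the view-change remains ``all or nothing'' among honest replicas during an all-prompt window --- in particular that honest replicas enter the next view holding mutually consistent highest certified blocks (which is where Lemma~\ref{lem:smrh:chain} and the reasoning inside Lemma~\ref{lem:smrh:safety} are reused) so that the next honest leader's first proposal is voted by everyone. A minor point to address is that a replica may have been sluggish just before the period starts and hence lag behind; I would note that this affects only a bounded prefix and does not obstruct reaching an honest-leader view once the all-prompt period is long enough.
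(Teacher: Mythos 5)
Your proof follows essentially the same route as the paper's: an honest leader's view is never abandoned because the $8\Delta+(p-1)\alpha$ budget covers exactly the seven delay segments you enumerate (entry skew, $2\Delta$ status wait, proposal, \ackone, \texttt{vote-timer}, \voteone, \votetwo), Lemma~\ref{lem:smrh:chain} supplies the chain availability needed for everyone to vote on the first proposal, and a Byzantine leader is evicted via the \blameone/\blametwo cascade. The extra observations you add (round-robin reaching an honest leader within $n$ views, and the caveat about replicas that were sluggish just before the all-prompt window) are sound refinements of points the paper leaves implicit, but the decomposition and key steps coincide with the paper's proof.
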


\begin{proof}
    We assume all honest replicas stay prompt.
    
    If the leader $L$ is honest, a view-change will not occur and all honest replicas keep committing new blocks. 
    The $2\Delta$ waiting window after entering the new view is sufficient for the new leader to collect all status messages from the honest replicas, since any honest replica enters the new view at most $\Delta$ later than the leader and the status message takes at most $\Delta$ time to reach $L$.
    Therefore, the honest leader is able to propose a block extending the highest certified block $B$ among $f+1$ distinct signed \status messages as the first proposal.
    By Lemma \ref{lem:smrh:chain}, $f+1$ honest replicas have the chain that $B$ is extending before entering the new view.
    Therefore, all honest replicas will vote for the proposal.
    For next proposals, the honest leader proposes blocks extending the last proposed block, and all honest replicas will vote for the proposals as well.

    Now we show that any honest replica is able to commit $p$ blocks within $8\Delta+(p-1)\alpha$ time. 
    After entering a new view, a time period of
    $8\Delta$ is sufficient for the first block to get committed, since 
    (1) the leader may be at most $\Delta$ later to enter the new view, 
    (2) the leader waits for $2\Delta$ after entering the new view, 
    (3) the proposal takes at most $\Delta$ to reach all honest replicas, which triggers the \ackone message,
    (4) the \ackone messages take at most $\Delta$ to reach all honest replicas,
    (5) any honest replica waits for $\Delta$ before sending \voteone,
    (6) the \voteone messages take at most $\Delta$ to reach all honest replicas,
    (7) the \votetwo messages take at most $\Delta$ to reach all honest replicas.
    Then after the first block, there should be one block proposal from the leader in every $\alpha$ time that gets committed in a pipeline fashion.
    Therefore, an honest leader has sufficient time and does not equivocate, and thus will not be blamed by any honest replicas. Then, all honest replicas will exchange the \ack and \vote messages and commit.
    
    On the other hand, any Byzantine leader will be replaced by a view-change if it sends equivocating blocks or does not propose the blocks quickly enough, similar to the proof of Theorem \ref{thm:smr:liveness}.
\end{proof}

\begin{theorem}[Good-case Latency]\label{thm:smrh:efficiency}
    During the periods in which all honest replicas stay prompt, if the leader is honest, then the proposed value will be committed in $\Delta+4\delta$ time after being proposed.
\end{theorem}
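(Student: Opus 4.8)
The plan is to trace a single honest-leader proposal through the six steady-state steps of \smrms, bookkeeping the time by which each honest replica completes each step under $\delta$-bounded delivery among prompt replicas, and to invoke the liveness analysis (proof of Theorem~\ref{thm:smrh:liveness}) to rule out blame certificates and equivocation during the relevant window. Fix a time $t$ at which the honest leader $L$ broadcasts $\langle \texttt{propose}, B_k, \fs, v\rangle_L$, and assume every honest replica is prompt throughout $[t, t+\Delta+4\delta]$; recall there are at least $f+1$ honest (hence now prompt) replicas and at most $f$ Byzantine ones, and that between prompt honest replicas a message sent at time $t'$ is delivered by $t'+\delta$ (this is what produces the $\delta$ terms, exactly as in the proof of Theorem~\ref{thm:smr:efficiency}).

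First I would handle the Forward and Timer steps. Every honest replica receives the proposal by $t+\delta$, and since $L$ is honest and the chain that $B_k$ extends is already known in steady state, each honest replica forwards the proposal and broadcasts $\langle \ackone, B_k, v\rangle_i$ by $t+\delta$; so at least $f+1$ \ackone messages are in flight by $t+\delta$ and reach every honest replica by $t+2\delta$, whence each honest replica sets $\texttt{vote-timer}_k$ to $\Delta$ no later than $t+2\delta$. Next, because $L$ is honest it does not equivocate, and because an honest leader proposes on time no honest replica sends $\langle\blameone,v\rangle$ in this window; since Byzantine replicas number at most $f<f+1$, no blame certificate $\fb_v$ is ever formed (this is precisely the content of the liveness proof, which I would cite rather than repeat). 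Hence when $\texttt{vote-timer}_k$ expires, at time $\le t+2\delta+\Delta$, every honest replica sends $\langle \voteone, B_k, v\rangle_i$ at Step~\ref{smrh:vote1}. The $\ge f+1$ \voteone messages reach all honest replicas by $t+3\delta+\Delta$, so each honest replica forms $\fc_v(B_k)$ and, with the no-equivocation/no-$\fb_v$ side-condition still holding, broadcasts $\langle \votetwo, B_k, v\rangle_i$ at Step~\ref{smrh:vote2} by $t+3\delta+\Delta$; the $\ge f+1$ \votetwo messages then arrive at every honest replica by $t+4\delta+\Delta$, and the same side-condition lets every honest replica commit $B_k$ at Step~\ref{smrh:commit} by time $t+\Delta+4\delta$.

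The only genuinely non-routine point is justifying that none of the ``no equivocating blocks/proposals signed by $L$ or a blame certificate $\fb_v$ are received'' guards in Steps~\ref{smrh:vote1}--\ref{smrh:commit} ever fires during $[t, t+\Delta+4\delta]$; the rest is arithmetic over five communication hops ($\delta$ for the proposal, $\delta$ for \ackone, $\Delta$ on the timer, then $\delta$ each for \voteone and \votetwo). I would dispose of that point exactly as in the proof of Theorem~\ref{thm:smrh:liveness}: an honest, timely leader is never blamed by an honest replica, and $f$ Byzantine replicas alone cannot assemble the $f+1$ \blameone messages required for $\fb_v$, so both $\fb_v$ and an equivocating proposal signed by $L$ are absent, keeping every honest replica on the fast path and giving the claimed $\Delta+4\delta$ bound.
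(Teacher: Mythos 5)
Your proposal is correct and follows essentially the same route as the paper's proof: a hop-by-hop accounting of $\delta$ for the proposal, $\delta$ for \ackone, the $\Delta$ timer, and $\delta$ each for \voteone and \votetwo, totaling $\Delta+4\delta$. Your explicit discharge of the ``no equivocation / no blame certificate'' guards by appeal to the liveness argument is a small extra care the paper leaves implicit, but it does not change the structure of the argument.
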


\begin{proof}
    When the leader is honest, it proposal will take time $\delta$ to reach all replicas. Then, the \ackone messages take $\delta$ to reach all honest replicas. After receiving the \ackone messages, all honest replica will wait for time $\Delta$ before sending the \voteone message. The \voteone messages take $\delta$ to reach all honest replicas, and trigger them to send \votetwo messages. Finally, the above \votetwo messages reach all honest replicas after $\delta$ time, leading to commit at all honest replicas.
    Therefore, the any proposed value will be committed in $\Delta+4\delta$ time.
\end{proof}

\begin{theorem}\label{thm:smr:ms}
\smrms protocol solves  Byzantine  fault tolerant state machine replication in the authenticated setting under the mobile sluggish fault model, with a \latency of  $\Delta+4\delta$ per command and optimal resilience of $f<n/2$.
\end{theorem}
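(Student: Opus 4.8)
The plan is to assemble Theorem~\ref{thm:smr:ms} from the three ingredients already established for \smrms --- safety (Theorem~\ref{thm:smrh:safety}), liveness (Theorem~\ref{thm:smrh:liveness}), and good-case latency (Theorem~\ref{thm:smrh:efficiency}) --- and then to argue separately that the resilience bound $f<n/2$ is both sufficient and optimal. Concretely, first I would observe that the safety clause of Definition~\ref{def:smr} is verbatim the conclusion of Theorem~\ref{thm:smrh:safety}: honest replicas never commit distinct blocks at the same height. That theorem in turn rests on Lemmas~\ref{lem:smrh:chain} and~\ref{lem:smrh:safety}, whose equivocation-detection arguments only require that every $f+1$-sized quorum (of \ackone, \voteone, or \votetwo messages) contains at least one honest and prompt replica at the relevant time --- which is exactly the standing assumption that the honest-and-prompt set is always of size $>n/2 \geq f+1$ under $n=2f+1$.

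Next, for liveness I would invoke Theorem~\ref{thm:smrh:liveness}: during any interval in which all honest replicas stay prompt, all honest replicas keep committing new blocks, because the $8\Delta+(p-1)\alpha$ blame timeout of Step~\ref{smrh:blame1} is tuned so that an honest leader in a prompt interval is never blamed, while a stalling or equivocating Byzantine leader is replaced via the \blameone/\blametwo view-change. To upgrade this to the unconditional ``each client request is eventually committed'' of Definition~\ref{def:smr}, I would note --- as is standard in the mobile sluggish line of work~\cite{guo2019synchronous, synchotstuff} --- that one assumes prompt intervals of length $\Omega(\Delta)$ recur, so that eventually an honest leader presides over an interval long enough to propose and commit a block batching every outstanding request; the round-robin leader rotation plus the view-change guarantee that an honest leader is reached.

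For the latency claim I would simply restate Theorem~\ref{thm:smrh:efficiency}: in the good case the leader is honest, and during a prompt interval a proposed block is committed within $\Delta+4\delta$ --- $\delta$ for the proposal to reach all replicas, $\delta$ for the \ackone round, the mandatory $\Delta$ \texttt{vote-timer} wait, $\delta$ for the \voteone round, and $\delta$ for the \votetwo round --- which matches Definition~\ref{def:goodcase} instantiated with the good-leader condition, giving good-case latency $\Delta+4\delta$ per command. Finally, for optimality of $f<n/2$: sufficiency is witnessed by taking $n=2f+1$ throughout, and necessity follows because the mobile sluggish model is a strict weakening of plain synchrony (take the sluggish set empty), so the $f<n/2$ lower bound for synchronous BA/SMR carries over unchanged.

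I expect the main obstacle to be the liveness upgrade rather than any calculation: Theorem~\ref{thm:smrh:liveness} is stated only for intervals in which all honest replicas remain prompt, whereas Definition~\ref{def:smr} demands eventual commit without qualification, so the proof must make the recurring-prompt-interval assumption explicit and then chain the view-change mechanism across intervals to reach an honest leader during a sufficiently long prompt window. Safety and latency are essentially immediate corollaries of the already-proved statements.
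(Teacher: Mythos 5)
Your proposal matches the paper's proof, which is literally a one-line citation of Theorems~\ref{thm:smrh:safety}, \ref{thm:smrh:liveness}, and \ref{thm:smrh:efficiency}; your extra care about upgrading the prompt-interval-qualified liveness statement to the unconditional liveness of Definition~\ref{def:smr} is a reasonable refinement the paper glosses over, but it does not change the approach.
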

\begin{proof}
    Proved by Theorem \ref{thm:smrh:safety}, \ref{thm:smrh:liveness} and \ref{thm:smrh:efficiency}.
\end{proof}

\end{document}